\newtheorem{thm}{Theorem}[section]
\newtheorem{cor}[thm]{Corollary}
\newtheorem{lem}[thm]{Lemma}
\newtheorem{prop}[thm]{Proposition}
\newtheorem{example}[thm]{Example}
\newtheorem{observation}[thm]{Observation}
\newtheorem{remarks}[thm]{Remark}
\newtheorem{hypothesis}{Hypothesis}
\theoremstyle{definition}
\newtheorem{defn}{Definition}[section]
\numberwithin{equation}{section} \theoremstyle{remark}
\newcommand{\rr}{\mathbb{R}}
\def\sup{{\mathrm{{\rm sup}}}}
\def\AA{\mathcal A}
\def\<{\langle}
\def\>{\rangle}
\def\beq{\begin{equation}}
\def\nneq{\end{equation}}
\def\bdef{\begin{defn}}
\def\ndef{\end{defn}}
\def\bthm{\begin{thm}}
\def\nthm{\end{thm}}
\def\bprop{\begin{prop}}
\def\nprop{\end{prop}}
\def\brmk{\begin{remarks}}
\def\nrmk{\end{remarks}}
\def\bexa{\begin{example}}
\def\nexa{\end{example}}
\def\obse{\begin{observation}}
\def\eobse{\end{observation}}
\def\blem{\begin{lem}}
\def\nlem{\end{lem}}
\def\bcor{\begin{cor}}
\def\ncor{\end{cor}}
\date{}
\def\bexe{\begin{exe}}
\def\nexe{\end{exe}}
\def\bprf{\begin{proof}}
\def\nprf{\end{proof}}
\def\bdes{\begin{description}}
\def\ndes{\end{description}}
\def\blem{\begin{lem}}
\def\nlem{\end{lem}}
\def\blem{\begin{lem}}
\def\nlem{\end{lem}}
\author{Nian Yao}
\address{Nian Yao\\School of Mathematics and Statistics, Shenzhen University, Shenzhen, Guangdong Province, 518060, China.}
\email{yaonian@szu.edu.cn}
\author{Zhiming Yang }
\address{Zhiming Yang\\School of Mathematics and Statistics, Shenzhen University, Shenzhen, Guangdong Province, 518060, China. }
\email{yzm1991317@qq.com}
\begin{document}
\date{}
\title[Optimal excess-of-loss reinsurance and investment problem for an
insurer with default risk under a stochastic volatility model]{Optimal excess-of-loss
reinsurance and investment problem for an insurer with default risk under a
stochastic volatility model}
\maketitle

\noindent \textbf{Abstract:} In this paper, we study an optimal
excess-of-loss reinsurance and investment problem for an insurer in
defaultable market. The insurer can buy reinsurance and invest in the
following securities: a bank account, a risky asset with stochastic
volatility and a defaultable corporate bond. We discuss the optimal
investment strategy into two subproblems: a pre-default case and a
post-default case. We show the existence of a classical solution to a
pre-default case via super-sub solution techniques and give an explicit
characterization of the optimal reinsurance and investment policies that
maximize the expected CARA utility of the terminal wealth. We prove a
verification theorem establishing the uniqueness of the solution. Numerical
results are presented in the case of the Scott model and we discuss economic
insights obtained from these results. \vskip0.3cm

\noindent\textbf{Keyword:} {\ optimal reinsurance $\cdot$ optimal investment
$\cdot$ default risk $\cdot$ Hamilton-Jacobi-Bellman equation $\cdot$
stochastic volatility model.} \vskip0.3cm

\noindent 2010 Mathematical Subject Classification: Primary 93E20; Secondary 60H30

\section{Introduction}

\noindent\ \ The theory of optimal investment dates back to the seminal
works of Merton (1969, 1971, 1990). In the setting of continuous-time
models, an optimization problem of an agent who invests his/her wealth into
a financial market to maximize the expected utility of terminal wealth was
studied. He derived a solution to this optimization problem for a complete
market by employing tools of optimal stochastic control. Browne(1995)
considered the risk process is approximated by a Browmian motion with drift
and the stock price process modeled by a geometric Browmian motion and the
insurer maximizes the expected constant absolute risk aversion(CARA) utility
from the terminal wealth. Under this assumption, when the interest rate of a
risk-free bond is zero, the optimal strategy also minimizes the ruin
probability. Hipp and Plum(2000) studied risk process follows the classical
Cramer-Lundb$\acute{e}$rg model and the insurer can invest in a risky asset
to minimize the ruin probability. However, the interest rate of the bond in
their model is implicitly assumed to be zero. Liu and Yang(2004) extended
the model of hipp and Plum(2000) to incorporate a non-zero interest rate.
But in this case ,a closed-form solution cannot be obtained. Yang and
Zhang(2005)considered that the insurer is allowed to invest in the money
market and a risky asset. They obtained a closed form expression of the
optimal strategy when the utility function is exponential. Fern$\acute{a}$%
ndez et al.(2008) considered the risk model with the possibility of
investment in the money market and a risky asset modeled by a geometric
Brownian motion. Via the Hamilton-Jacobi-Bellman(HJB) approach, they found
the optimal strategy when the insurer's preferences are exponential.
Badaoui(2013) extended the model of Fern$\acute{a}$ndez et al.(2008) to a
risky asset with stochastic volatility, when the insurer preferences are
exponential, they prove the existence of a smooth solution, and they give an
explicit form of the optimal strategy. \vskip0.3cm

For the reinsurance problem, Promislow and Young (2005) obtained investment
and reinsurance strategies to minimize the ruin probability for a diffusion
risk model. Bai and Guo (2008) considered an optimal proportional
reinsurance and investment problem with multiple risky assets for a
diffusion risk model. Cao and Wan (2009) investigated the proportional
reinsurance and investment problem of utility maximization for an insurance
company. Zeng and Li (2011) obtained the time-consistent investment and
proportional reinsurance policies under the mean-variance criterion for an
insurer. Gu et al. (2010) introduced the CEV model into the optimal
reinsurance and investment problem for insurers. Later, Liang et al. (2012)
and Lin and Li (2011) investigated the optimal reinsurance and investment
problem for an insurer with a jump diffusion risk process under the CEV
model. Li et al. (2012) began to apply the Heston model to study the
reinsurance and investment problem under the mean-variance criterion.
Asmussen et al. (2000) firstly studied the optimal dividend problem under
the control of excess-of-loss reinsurance and showed that excess-of-loss
reinsurance is more profitable than the proportional reinsurance. Zhao and
Rong(2013) considered the risk process approximated by a Heston model with
drift and they obtained the optimal excess-of-loss reinsurance strategy. %
\vskip0.3cm

For the risk of default problem, Bielecki and Jang(2006) considered that the
insurer is allowed to invest in bond and risky asset and default asset whose
coefficient is constant. Capponi and Figueroa-L$\acute{o}$pez(2014)
considered the same problem that the risky asset is a markov process with
multi-dimensional continuous time in finite state. In these two articles,
the dynamic programming method was adopted, and the optimal strategy was
obtained. Jiao and Pham (2011) used a default-density modelling approach and
addressed the power utility maximization problem using the terminal wealth
in a financial market with a stock exposed to a counter-party risk. By
decomposing the optimization problem into two sub-problems, one that is
stated before the default time and one that is stated after default, they
derive the optimal investment strategy by applying standard martingale
approaches. Bo et al. (2010, 2013) considered a portfolio optimization
problem with default risk under the intensity-based reduced-form framework,
and the goal was to maximize the infinite horizon expected discounted HARA
utility of consumption, where the default risk premium and the default
intensity were assumed to rely on a stochastic factor described by a
diffusion process. Zhu et al.(2015) studied the optimal investment and
reinsurance problem for an insurer whose investment opportunity set contains
a default security and the closed-form expressions for optimal control
strategies and the corresponding value functions are derived. Bo et al.
(2016) considered an optimal risk-sensitive portfolio allocation problem,
which explicitly accounts for the interaction between market and credit risk
and show the existence of a classical solution to this system via super-sub
solution techniques and give an explicit characterization of the optimal
feedback strategy. \vskip0.3cm

In our paper, the insurer is allowed to purchase excess-of-loss reinsurance
and invest in a risk-free asset and a risky stock asset follows the general
stochastic volatility model and a defaultable corporate bond. Comparing with
Badaoui(2013) and Zhu et al.(2015), we add an excess-of-loss reinsurance and
default risk into the model and generalize the Heston model to the more
general stochastic volatility model. We work under the martingale invarance
hypothesis. Herein, we also assume the existence of the conditional density
of the default time $\tau $ . Let the surplus process of the insurer satisfy
a jump--diffusion process, and the dynamics of the risky stock price follow
a stochastic volatility model. The insurance company's manager can
dynamically choose a proportion reinsurance strategy and allocate the wealth
into the above three assets. The goal is to maximize the finite horizon
expected exponential utility of terminal wealth. In the spirit of Bielecki
and Jang(2006), we decompose the original optimization problem into two
sub-problems: a pre-default case and a post-default case. A dynamic
programming principle is employed to derive the Hamilton--Jacobi--Bellman
(HJB) equation. We show the existence of a classical solution to a
pre-default case via super-sub solution techiniques. The closed-form
expressions for optimal control strategies and the corresponding value
functions are derived.

The remainder of this paper is organized as follows: In Section 2, we
introduce the model and the problem of our research.  In Section 3, we
derive the HJB equation for the pre-default case and the post-default case,
and then, the explicit expressions for optimal control strategies and the
corresponding value functions are obtained. And also we show the existence
of a classical solution to a pre-default case via super-sub solution
techiniques. In addition, we provide the verification theorem. In Section 4
demonstrates our results with numerical examples. \vskip0.3cm

In the Appendix we give some results about Partial Differential Equations
which is important to our proof.

\section{The model}

\subsection{Dynamics of reserve process}

The insurer's surplus process is described by the classical risk model
perturbed by a diffusion, i.e.,
\begin{equation}
dR_t=cdt-dC_t,
\end{equation}
where $c$ is the premium rate, $C_t$ represents the cumulative claims up to
time $t$. Suppose the premium is
calculated according to the expected value principle, i.e., $c=(1+\eta
)\lambda\mu_{\infty}$, where $\eta>0$ is the safety loading of the insurer.
We assume that $C_t=\sum\limits_{i=1}^{N_{t}}X_{i}$ is a compound Poisson
process, where $N_{t}$ is a homogeneous Poisson process with intensity $%
\lambda$ and jump times $\{T_{i}\}_{i\geq1}$. The claim sizes $%
\{X_{i},i\geq1\}$ are independent and identically distributed positive
random variables with common distribution $F(x)$. Denote the mean value $%
E[X_{i}]=\mu_{\infty}$ and $D:={\mathrm{\mathrm{sup}}}\{z: F(z)<+\infty\}$.
Suppose that $F(0)=0$, $0<F(x)<1$ for $0<x<D$ and $F(x)=1$ for $x\geq D$. In
addition, we assume that $N_{t}$ is independent of the claim sizes $X_{i}$, $%
i\geq1$. \newline

The insurer is allowed to purchase excess-of-loss reinsurance to reduce the
underlying insurance risk. Let $a$ be a (fixed) excess-of loss retention
level. Then the corresponding reserve process is
\begin{equation}
dR_{t}=c^{(a)}dt-dC^{(a)}_{t},
\end{equation}
where\newline
\begin{align*}
c^{(a)} &
=(1+\eta)\lambda\mu_{\infty}-(1+\theta)\lambda\{\mu_{\infty}-E[%
\min(X_{1},a)]\} \\
& =(\eta-\theta)\lambda\mu_{\infty}+(1+\theta)\lambda\int_{0}^{a} \bar {F}%
(x)dx,
\end{align*}
$C^{(a)}_{t}=\sum\limits_{i=1}^{N_{t}}\min(X_{i},a)$ and $\theta$ denotes
the safety loading of the reinsurer and $\bar{F}(x)=1-F(x)$. Without loss of
generality, we assume that $\theta>\eta$ and
\begin{equation*}
\exp\left\{ \int_{0}^{t} e^{-rs}dC^{(a)}(s)\right\} <\infty, \forall
t<\infty.
\end{equation*}

\subsection{The financial market}

We assume $(\Omega,\mathcal{G},\mathbb{Q})$ to be a complete probability
space that is endowed with a reference filtration $\mathcal{F}=\{\mathcal{F}%
_{t}\}_{t\geq0}$ that satisfies the usual conditions. The probability
measure $\mathbb{Q}$ is a martingale probability measure and is assumed to
be equivalent to the real-world measure $\mathbb{P}$. Let $\tau$ be a
non-negative random variable on this space. $\tau$ represents the first jump
time of a Poisson process with constant intensity $h^{Q}>0$. For the sake of
convenience, we assume that $\mathbb{Q}(\tau=0)=0$ and $\mathbb{Q}(\tau>0)>0$%
, which implies that the default cannot occur at the initial time and can
occur at any time until maturity. For $t\geq0$, define a default indicator
process $H=(H_{t};t\geq0)$ by $H_{t}=\mathbb{I}_{\{\tau\leq t\}}$. The
filtration $\mathcal{G}$ is defined using $\mathcal{G}_{t}=\mathcal{F}%
_{t}\bigvee \sigma(H(t);s\leq t)=\mathcal{F}_{t}\bigvee\sigma(\tau\bigwedge
t)$. Then, $\mathcal{G}=(\mathcal{G}_{t};t\geq0)$ is the smallest filtration
such that the random time $\tau$ is not necessarily a stopping time, and $%
\mathcal{G}_{t}$ is called the enlarged filtration. Such an information
structure is standard in the reduced-form approach.

Let the conditional survival probability be given by
\begin{equation}
\mathbb{Q}(\tau>t|\mathcal{F})=e^{-h^{Q}t},
\end{equation}
where the risk neutral intensity $h^{Q}$ is assumed to be constant; then,
the following process related to default
\begin{equation}
M^{Q}_{t}=H_{t}-\int_{0}^{t} (1-H_{u})h^{Q}du,
\end{equation}
is a $(\mathbb{Q},\mathcal{G})$ martingale.

By applying Proposition 1 in Zhu(2015), the $\mathbb{P}$-dynamics of the
defaultable bond price process $p(t,T_{1})$ are given by
\begin{equation}
dp(t,T_{1})=p(t-,T_{1})[r(Z_{t})dt+(1-H_{t})\delta(1-\Delta)dt-(1-H_{t-})%
\zeta dM^{P}_{t}],
\end{equation}
where $M^{P}_{t}=H_{t}-h^{Q}\int_{0}^{t} (1-H_{u})\Delta du$ is a $\mathcal{G%
}$-martingale under the real-world probability $\mathbb{P}$ and $%
\delta=h^{Q}\zeta$ is the credit spread under the real-world probability
measure, $\zeta$ is the loss rate, $h^{P}=h^{Q}\Delta$ is a constant and $%
\frac{1}{\Delta}\geq1$ denote the default risk premium.

The price process of the risk-free asset is given by
\begin{equation}
dS_{t}^{0}=S_{t}^{0}r(Z_{t})dt,
\end{equation}
where $r(\cdot)$ is the interest rate function. The process $Z_{t}$ can be
interpreted as the behavior of some economic factor that has an impact on
the dynamics of the risky asset and the bank account. For instance, the
external factor can be modeled by the mean reverting Ornstein-Uhlenbeck
(O-U) process:
\begin{equation}
dZ_{t}=\delta(\kappa-Z_{t})dt+\beta d\widetilde{W}_{t},Z_{0}=z,
\end{equation}
where $\delta$ and $\kappa$ are constant.

From Badaoui(2013), we assume the risky asset price satisfies the following
stochastic volatility model:
\begin{equation}  \label{price equation}
dS_{t}=S_{t}(\mu(Z_{t})dt+\sigma(Z_{t})dW_{1t}),
\end{equation}
where $S_{0}=1$, $W_{1t}$ is a standard Brownian motion; $\mu(\cdot)$ and $%
\sigma(\cdot)$ are respectively the return rate and volatility functions. $Z$
is an external factor modeled as a diffusion process solving

\begin{equation}  \label{SV equation}
dZ_{t}=g(Z_{t})dt+\beta(\rho dW_{1t}+\sqrt{1-\rho^{2}}dW_{2t}),
\end{equation}
where $Z_{0}=z\in\mathbb{R}$, $|\rho|\leq1$ and $\beta\neq0$, $W_{2t}$ is a
standard Brownian motion , $W_{1t}$ and $W_{2t}$ are independent and $%
\widetilde{W}=\rho W_{1t}+\sqrt{1-\rho^{2}}W_{2t}$. For example the risky asset price can be
given by the Scott model (Fouque et al., 2000; Rama and Peter, 2003):
\begin{equation}
dS_{t}=S_{t}(\mu_{0}dt+e^{Z_{t}}dW_{1t}),S_{0}=1,
\end{equation}
Here, we assume that $\mu_{0}$ is constant.

More details about stochastic volatility models can be bound in Fouque et
al. (2000).

\subsection{The wealth process}

We assume that the insurer is allowed to purchase excess-of-loss
reinsurance. The insurer has investment opportunities in a risky stock
asset, a risk-free asset and a corporate bond issued by a private
corporation, which may default at some random time $\tau$ , where the
investment horizon is $[0,T]$ and $T<T_{1}$. Let $\pi(t)=(l(t), m(t), a(t))$
be the reinsurance-investment strategy followed by the insurer, where $l(t)$
represents the amount of wealth invested into the stock market, $m(t)$ is
the amount of wealth invested in the corporate bond, and $a(t)$ denotes the
reinsurance strategy at time $t$. We assume that the corporate bond is not
traded after default. Let $\mathcal{A}$ denote all admissible strategies.
The reserve process subjected to this choice is denoted by $%
Y^{\pi}_{t}=Y(t,y,z,\pi)$, and its dynamics are given by %
\setcounter{equation}{10}
\begin{equation}  \label{risk process}
\begin{split}
dY^{\pi}_{t} & =\frac{(Y^{\pi}_{t}-l(t)-m(t))}{S_{t}^{0}}dS_{t}^{0}+\frac{%
l(t)}{S_{t}}dS_{t}+\frac{m(t)}{p(t)}dp(t)+dR_{t} \\
&
=[r(Z_{t})Y^{\pi}_{t}+(\mu(Z_{t})-r(Z_{t}))l(t)+c^{(a)}+(1-H_{t})m(t)%
\delta(1-\triangle)]dt \\
& +l(t)\sigma(Z_{t})dW_{1t}-m(t)(1-H_{t})\zeta
dM^{P}_{t}-d\sum_{i=1}^{N_{t}}\min(X_{i},a(t)).
\end{split}%
\end{equation}
Suppose that the insurer is interested in maximizing the CARA utility
function for his terminal wealth, say, at time $T$. The utility function is $%
U(y)=-e^{-\alpha y}$, $\alpha>0$, which is satisfies $U^{^{\prime}}>0$ and $%
U^{^{\prime\prime}}<0$. We are now in a position to formulate the following
optimization problem:
\begin{equation}
V(t,y,z,h)=\mathop{\sup}\limits_{\pi\in\mathcal{A}}
E^{P}[U(Y^{\pi}_{T})|(Y^{\pi}_{t},Z_{t},H_{t})=(y,z,h)].
\end{equation}

\begin{hypothesis}
\label{coefficient} 1.The functions $\mu(\cdot)$, $\sigma(\cdot)$ and $%
g(\cdot)$ are such that there exists a strong solution for Eqs.(\ref{price
equation}) and (\ref{SV equation}).\newline
2.The function $r(\cdot)$ is continuous, positive, and $r(z)<\mu(z)$, for
all $z\in \mathbb{R}$.\newline
\end{hypothesis}

\section{The main result}

Using dynamic programming techniques ,we find the corresponding HJB equation
is
\begin{equation}  \label{HJB}
\left\{ \begin{aligned} &\mathop{\sup}\limits_{\pi\in\AA}\mathcal
L^{\pi}J(t,y,z,h)=0,\\ &J(T,y,z,h)=U(y). \end{aligned}\right.
\end{equation}

where
\begin{equation}
\begin{split}
\mathcal{L}^{\pi}J(t,y,z,h)= & J_{t}(t,y,z,h)+J_{y}(t,y,z,h)\bigg(%
r(z)y+l(t)(\mu(z)-r(z))+c^{(a)}+m(t)(1-h)\delta\bigg) \\
& +J_{z}(t,y,z,h)g(z)+\frac{1}{2}J_{yy}(t,y,z,h)l(t)^{2}\sigma(z)^{2}+\frac {%
1}{2}J_{zz}(t,y,z,h)\beta^{2} \\
& +J_{yz}(t,y,z,h)\beta\rho\sigma(z)l(t)+\lambda\bigg(EJ(t,y-\min
(X_{1},a),z,h)-EJ(t,y,z,h)\bigg) \\
& +\bigg(J(t,y-m(t)\zeta,z,h+1)-J(t,y,z,h)\bigg)h^{P}(1-h).
\end{split}%
\end{equation}
Now we establish a verification theorem, which relates the value function $V$
with the HJB equation (\ref{HJB}).

\begin{thm}
\label{Verification Theorem}(Verification Theorem). Let $J(t,y,z,h)$ with $%
(t,y,z,h)\in[0,T]\times R \times R\times\{0,1\}$ be the classical solution
to the HJB equation (\ref{HJB}) with terminal condition $J(T,y,z,h)=U(y)$
for all $(y,z)\in R^{2}$. Also assume that for each $\pi\in\mathcal{A}$,
\begin{align}
& \int_{0}^{T} \int_{0}^{\infty}\mathbb{E}\left|
J(t,Y_{t}^{\pi}-\min(x,a),Z_{t},H_{t})-J(t,Y_{t-}^{\pi},Z_{t},H_{t})\right|
^{2}dF(x)dt<\infty,  \label{assumption 1} \\
& \int_{0}^{T} \mathbb{E}\left|
l(t,z)J_{y}(t,Y_{t-}^{\pi},Z_{t},H_{t})\right| ^{2}dt<\infty, \int_{0}^{T}
\mathbb{E}\left| J_{z}(t,Y_{t-}^{\pi},Z_{t},H_{t})\right| ^{2}dt<\infty,
\label{assumption 2} \\
& \forall s\in[0,T], \bigg\{\int_{s}^{v}
(J(t,Y_{t}^{\pi}-m(t)\zeta,Z_{t},1-H_{t})-J(t,Y_{t-}^{%
\pi},Z_{t-},H_{t-}))dM_{t}^{P}\bigg\}_{v\in[s,T]} \mathrm{~is~a~martingale} .
\label{assumption 3}
\end{align}
Then, under hypothesis (\ref{coefficient}-\ref{hypothesis 2}) and
assumptation (\ref{assumption 1}-\ref{assumption 3}), for each $u\in[0,t]%
,(y,z)\in R^{2}$,
\begin{equation}  \label{Value-DPE}
J(u,y,z,h)\geq V(u,y,z,h),
\end{equation}
If, in addition, there exists an optimal strategy $\pi^{*}$, then
\begin{equation*}
J(u,y,z,h)=V(u,y,z,h)=E[U(Y_{T}^{\pi^{*}})|(Y^{%
\pi^{*}}_{u},Z_{u},H_{u})=(y,z,h)].
\end{equation*}
\end{thm}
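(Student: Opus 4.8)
The plan is to apply the generalized It\^o formula for jump-diffusions to the process $s\mapsto J(s,Y_s^{\pi},Z_s,H_s)$ on the interval $[u,T]$, for an arbitrary fixed admissible strategy $\pi\in\mathcal{A}$ with initial data $(Y_u^{\pi},Z_u,H_u)=(y,z,h)$. Since $Y^{\pi}$ carries a continuous part driven by $W_{1}$ together with two sources of jumps (the compound Poisson claim stream $d\sum_{i=1}^{N_t}\min(X_i,a)$ and the default jump $-m(t)\zeta\,dM_t^{P}$), and $Z$ is a diffusion driven by $\rho W_{1}+\sqrt{1-\rho^{2}}W_{2}$, the It\^o expansion splits naturally into a finite-variation (drift) part, continuous stochastic integrals against $W_{1},W_{2}$, and compensated jump integrals against the claim martingale and against $M^{P}$. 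The first step is to verify that the drift part coincides exactly with $\mathcal{L}^{\pi}J(s,Y_{s-}^{\pi},Z_s,H_s)\,ds$, where $\mathcal{L}^{\pi}$ is the operator displayed above; this is a direct matching of terms, the compound-Poisson contribution producing the $\lambda(\mathbb{E}J(\cdots)-\mathbb{E}J(\cdots))$ term and the default jump producing the $h^{P}(1-h)$ term.

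Given this identification, the HJB equation (\ref{HJB}) immediately yields $\mathcal{L}^{\pi}J\le \mathop{\sup}\limits_{\pi\in\mathcal{A}}\mathcal{L}^{\pi}J=0$ for every admissible $\pi$, so the drift integrand is pointwise nonpositive. The second step is to show that the remaining stochastic integrals are true martingales with zero expectation; this is precisely the role of the integrability hypotheses. Assumption (\ref{assumption 2}) controls the $W_{1}$- and $W_{2}$-integrals involving $J_{y}$ and $J_{z}$, assumption (\ref{assumption 1}) controls the compensated claim integral in $L^{2}$, and assumption (\ref{assumption 3}) asserts directly that the default integral against $M^{P}$ is a martingale. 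To make this rigorous I would first localize by a sequence of stopping times $\{\theta_n\}$ reducing each local martingale, write the stopped identity, and take expectations to obtain
\begin{equation*}
\mathbb{E}\big[J(T\wedge\theta_n,Y_{T\wedge\theta_n}^{\pi},Z_{T\wedge\theta_n},H_{T\wedge\theta_n})\big]=J(u,y,z,h)+\mathbb{E}\int_u^{T\wedge\theta_n}\mathcal{L}^{\pi}J\,ds\le J(u,y,z,h).
\end{equation*}
Letting $n\to\infty$ and invoking the assumptions together with the terminal condition $J(T,\cdot)=U$ gives $\mathbb{E}[U(Y_T^{\pi})]\le J(u,y,z,h)$.

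Taking the supremum over $\pi\in\mathcal{A}$ on the left-hand side then produces the first assertion $V(u,y,z,h)\le J(u,y,z,h)$. For the equality statement, suppose an optimal $\pi^{*}$ exists, i.e.\ a strategy attaining the pointwise supremum inside (\ref{HJB}); then along $Y^{\pi^{*}}$ one has $\mathcal{L}^{\pi^{*}}J\equiv 0$, the drift term vanishes identically, and the same computation yields $\mathbb{E}[U(Y_T^{\pi^{*}})]=J(u,y,z,h)$. Combining with $V\le J$ and $V\ge\mathbb{E}[U(Y_T^{\pi^{*}})]$ forces $J(u,y,z,h)=V(u,y,z,h)=\mathbb{E}[U(Y_T^{\pi^{*}})\mid(Y_u^{\pi^{*}},Z_u,H_u)=(y,z,h)]$.

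The main obstacle I anticipate is not the algebraic identification of the drift with $\mathcal{L}^{\pi}$, which is routine, but the rigorous passage to true martingales and the removal of the localization: one must check that the stopped terminal values are uniformly integrable (or dominated) so that $\mathbb{E}[J(T\wedge\theta_n,\cdots)]\to\mathbb{E}[U(Y_T^{\pi})]$ and that the drift integral converges, both of which lean essentially on (\ref{assumption 1})--(\ref{assumption 3}). Particular care is needed at the default time, where $H$ jumps from $0$ to $1$ and the bond position is liquidated, so that the pre- and post-default arguments of $J$ must be handled consistently; this is exactly where hypothesis (\ref{assumption 3}) is indispensable.
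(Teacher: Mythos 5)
Your proposal follows essentially the same route as the paper's proof: apply It\^o's formula to $J(s,Y_s^{\pi},Z_s,H_s)$, compensate the claim and default jumps so the drift is exactly $\mathcal{L}^{\pi}J\le 0$ by the HJB equation, invoke assumptions (\ref{assumption 1})--(\ref{assumption 3}) to kill the stochastic integrals in expectation, set $v=T$ to get $J\ge V$, and rerun the computation with the maximizing strategy $\pi^{*}$ (for which the drift vanishes) to obtain equality. Your added localization by stopping times $\{\theta_n\}$ with a uniform-integrability passage to the limit is a mild technical refinement of the paper's direct assertion of the martingale property, not a different argument, and it is correct.
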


\begin{proof}
We only prove the pre-default case when $h=0$. The default-case $h=1$ is the
same as the pre-default case. Let $\pi\in\mathcal{A}$. Ito's formula implies
that for any $v\in[u,T]$,
\begin{equation}
\begin{aligned}\label{Ito's formula}
J(v,Y_v^{u,y,z,\pi},Z_v,H_v)&=J(u,y,z,0)+\int_u^vJ_t(t,Y_t^{u,y,z,%
\pi},Z_t,H_t)dt+\int_u^vJ_y(t,Y_t^{u,y,z,\pi},Z_t,H_t)dY_t^c\\
&+\int_u^vJ_z(t,Y_t^{u,y,z,\pi},Z_t,H_t)d
Z_t+\frac{1}{2}\int_u^vJ_{yy}(t,Y_t^{u,y,z,\pi},Z_t,H_t)d\<Y_t^c,Y_t^c\>_t\\
&+\frac{1}{2}\int_u^vJ_{zz}(t,Y_t^{u,y,z,\pi},Z_t,H_t)d\<Z,Z\>_t+%
\int_u^vJ_{yz}(t,Y_t^{u,y,z,\pi},Z_t,H_t)d\<Y,Z\>_t\\
&+\int_u^v(J(t,Y_t^{u,y,z,\pi}-m(t)\zeta,Z_t,1-H_t)-J(t,Y_{t-}^{u,y,z,%
\pi},Z_{t-},H_{t-}))d H_t\\ &+\int_u^v\int_0^\infty
(J(t,Y_t^{u,y,z,\pi}-\min(x,a),Z_t,H_t)-J(t,Y_{t-}^{u,y,z,\pi},Z_{t-},H_{t-}))%
\bar{N}(dx,dt)\\
&=J(u,y,z,0)+\int_u^vJ_t(t,Y_t^{u,y,z,\pi},Z_t,H_t)dt+%
\int_u^vJ_y(t,Y_t^{u,y,z,\pi},Z_t,H_t)l(t)\sigma(Z_t)d W_{1t}\\
&+\int_u^vJ_y(t,Y_t^{u,y,z,\pi},Z_t,H_t)\bigg[r(Z_t)Y_t^{u,y,z,\pi}+(%
\mu(Z_t)-r(Z_t))l(t)+c^{(a)}\\
&+(1-H_t)m(t)\delta(1-\Delta)+m(t)\zeta(1-H_t)^2h^P\bigg]dt+%
\int_u^vJ_z(t,Y_t^{u,y,z,\pi},Z_t,H_t)g(Z_t)dt\\
&+\int_u^vJ_z(t,Y_t^{u,y,z,\pi},Z_t,H_t)\beta
d\tilde{W}_t+\frac{1}{2}\int_u^vJ_{yy}(t,Y_t^{u,y,z,\pi},Z_t,H_t)l^2(t)%
\sigma^2(Z_t)dt\\
&+\frac{1}{2}\int_u^vJ_{zz}(t,Y_t^{u,y,z,\pi},Z_t,H_t)\beta^2dt+%
\int_u^vJ_{yz}(t,Y_t^{u,y,z,\pi},Z_t,H_t)\rho\beta
l(t)\sigma(Z_t)dt\\&+\int_u^v(J(t,Y_t^{u,y,z,\pi}-m(t)%
\zeta,Z_t,1-H_t)-J(t,Y_{t-}^{u,y,z,\pi},Z_{t-},H_{t-}))dH_t\\
&+\int_u^v\int_0^\infty
(J(t,Y_t^{u,y,z,\pi}-\min(x,a),Z_t,H_t)-J(t,Y_{t-}^{u,y,z,\pi},Z_{t-},H_{t-}))%
\bar{N}(dx,dt) \end{aligned}
\end{equation}
where $\bar{N}$ is the Poisson random measure on $\mathbb{R}%
_{+}\times[0,\infty[$ defined by $\bar{N}=\mathop{\sum}\limits_{n\geq1}%
\delta_{(X_{n}, T_{n})}$.

Compensating (\ref{Ito's formula}) by
\begin{equation}
\begin{aligned} \lambda \int_u^v\int_0^\infty
(J(t,Y_t^{u,y,z,\pi}-\min(x,a),Z_t,H_t)-J(t,Y_{t-}^{u,y,z,%
\pi},Z_{t-},H_{t-}))dF(x)dt \\
\int_u^v(J(t,Y_t^{u,y,z,\pi}-m(t)\zeta,Z_t,1-H_t)-J(t,Y_{t-}^{u,y,z,\pi},Z_{t-},H_{t-})(1-H_t)h^P)dt \end{aligned}
\end{equation}
we obtain the following:
\begin{equation}
\begin{aligned}\label{Ito's formula 1}
 &J(v,Y_v^{u,y,z,\pi},Z_v,H_v)\\
=&J(u,y,z,0)+\int_u^v \mathcal L^{\pi}J(t,Y_t^{u,y,z,\pi},Z_{t-},H_{t-})dt\\
+&\int_u^vJ_y(t,Y_t^{u,y,z,\pi},Z_t,H_t)l(t)\sigma(Z_t)d W_{1t}+\int_u^vJ_z(t,Y_t^{u,y,z,\pi},Z_t,H_t)\beta d\tilde{W}_t\\
+&\int_u^v(J(t,Y_t^{u,y,z,\pi}-m(t)\zeta,Z_t,1-H_t)-J(t,Y_{t-}^{s,y,z,%
\pi},Z_{t-},H_{t-}))dM_t^P\\
+&\int_u^v\int_0^\infty(J(t,Y_t^{u,y,z,\pi}-\min(x,a),Z_t,H_t)-J(t,Y_{t-}^{u,y,z,%
\pi},Z_{t-},H_{t-}))(\bar{N}(dx,dt)-\lambda dF(x)dt) \end{aligned}
\end{equation}
The assumption of (\ref{assumption 2}), imply that all the stochastic
integrals with respect to the Brownian motion are martingales. By assumption
(\ref{assumption 1}):
\begin{equation*}
\int_{u}^{v}\int_{0}^{\infty}(J(t,Y_{t}^{u,y,z,\pi}-%
\min(x,a),Z_{t},H_{t})-J(t,Y_{t-}^{u,y,z,\pi},Z_{t-},H_{t-}))(\bar{N}%
(dx,dt)-\lambda dF(x)dt)
\end{equation*}
is a martingale (see Ikeda and Watanabe, 1989, p. 63). By assumption (\ref%
{assumption 3}):
\begin{equation*}
\int_{u}^{v}
(J(t,Y_{t}^{\pi}-m(t)\zeta,Z_{t},1-H_{t})-J(t,Y_{t-}^{%
\pi},Z_{t-},H_{t-}))dM_{t}^{P}
\end{equation*}
is a martingale. Then, taking expectations in (\ref{Ito's formula 1})
yields:
\begin{equation*}
E[J(v,Y_{v}^{\pi},Z_{v},H_{v})]=J(u,y,z,0)+E\bigg[\int_{u}^{v} \mathcal{L}%
^{\pi}F(t,Y_{t-}^{\pi},Z_{t-},H_{t})dt\bigg]
\end{equation*}
Since $F$ satisfies the HJB equation (\ref{HJB-pre}), we obtain that
\begin{equation}  \label{Ito2}
E[J(v,Y_{v}^{\pi},Z_{v},H_{v})]\leq J(u,y,z,0),
\end{equation}
and letting $v=T$ in (\ref{Ito2}), we get that
\begin{equation*}
J(u,y,z,0)\geq V(u,y,z,0).
\end{equation*}
To justify the second part of the theorem, we repeat the above calculations
for the strategy given by $\pi^{*}(t,Z_{t-})$. Then we have
\begin{equation*}
J(u,y,z,0)=E[U(Y_{T}^{\pi^{*}})|(Y_{u}^{\pi^{*}},Z_{u},H_{u}))=(y,z,0)]\leq
V(u,y,z,0),
\end{equation*}
and with the first part of the proof we get that
\begin{equation*}
J(u,y,z,0)=E[U(Y_{T}^{\pi^{*}})|(Y_{u}^{\pi^{*}},Z_{u},H_{u}))=(y,z,0)]=
V(u,y,z,0).
\end{equation*}
\end{proof}

\subsection{Period after default}

We define the pre-default and post-default value function by
\begin{equation}
\begin{aligned} V(t,y,z,h)=\left\{\begin{aligned} &V(t,y,z,0), {\rm
if}~h=0~({\rm the~pre~default~case}),\\ &V(t,y,z,1), {\rm if}~h=1~({\rm
the~post~default~case}), \end{aligned}\right. \end{aligned}
\end{equation}
and calculate the post-default case first.

When $h=1$, the HJB equation (\ref{HJB}) transforms into a relatively simple
form
\begin{equation}  \label{HJB-post}
\begin{aligned}
0=&J_t(t,y,z,1)+\mathop{\sup}\limits_{\pi\in\AA}\bigg%
\{J_y(t,y,z,1)[r(z)y+l(t)(\mu(z)-r(z))+c^{(a)}]\\
&+J_z(t,y,z,1)g(z)+\frac{1}{2}J_{yy}(t,y,z,1)l(t)^2\sigma(z)^2+%
\frac{1}{2}J_{zz}(t,y,z,1)\beta^2+J_{yz}(t,y,z,1)\beta\rho\sigma(z)l(t)\\
&+\lambda(EJ(t,y-\min(X_1,a),z,1)-EJ(t,y,z,1))\bigg\}\\
=&J_t(t,y,z,1)+\mathop{\sup}\limits_{l\in\rr}\bigg\{J_y(t,y,z,1)%
\left[r(z)y+l(t)(\mu(z)-r(z))\right]\\
&+J_z(t,y,z,1)g(z)+\frac{1}{2}J_{yy}(t,y,z,1)l(t)^2\sigma^2(z)+%
\frac{1}{2}J_{zz}(t,y,z,1)\beta^2+J_{yz}(t,y,z,1)\beta\rho\sigma(z)l(t)\bigg%
\}\\
&+\mathop{\sup}\limits_{a\in\rr}\bigg\{c^{(a)}J_y(t,y,z,1)+\lambda(
EJ(t,y-\min(X_1,a),z,1)-EJ(t,y,z,1))\bigg\} \end{aligned}
\end{equation}
with terminal condition $J(T,y,z,1)=U(y)$.

In order to obtain a linear PDE, in this work we considered only the case
where the correlation coefficient is equal to zero $(\rho=0)$.\newline
In addition to Hypothesis \ref{coefficient}, we assume the following:

\begin{hypothesis}
\label{hypothesis 2}1. $r(z)=r$ is constant;\newline
2. $g$ is uniformly Lipschitz and bounded;\newline
3. $\frac{(\mu(z)-r)^{2}}{\sigma^{2}(z)}$ bounded with a bounded first
derivative.
\end{hypothesis}

Due to the form of the utility function, we conjecture the following
function as a solution to the HJB equation (\ref{HJB-post}):
\begin{equation}  \label{post-solution}
f(t,y,z)=J(t,y,z,1)=-\xi(t,z)\exp\left\{ -\alpha ye^{r(T-t)}\right\} .
\end{equation}
where $\xi(t,z)$ is defined below as a solution to a Cauchy problem. From (%
\ref{post-solution}), we have:

\begin{equation}
\begin{aligned} &f_t(t,y,z)=\left(-\xi_t-\alpha yr\xi
e^{r(T-t)}\right)\exp\left\{-\alpha ye^{r(T-t)}\right\},\\
&f_y(t,y,z)=\alpha\xi e^{r(T-t)} \exp\left\{-\alpha ye^{r(T-t)}\right\},\\
&f_{yy}(t,y,z)=-\alpha^2\xi e^{2r(T-t)} \exp\left\{-\alpha
ye^{r(T-t)}\right\},\\ &f_z(t,y,z)=-\xi_z \exp\left\{-\alpha
ye^{r(T-t)}\right\},\\ &f_{zz}(t,y,z)=-\xi_{zz} \exp\left\{-\alpha
ye^{r(T-t)}\right\}. \end{aligned}
\end{equation}

\begin{equation}
\begin{aligned} &E\left[f(t,y-\min(X_1,a),z)-f(t,y,z)\right]\\ &=-\xi\alpha
e^{r(T-t)}\exp\left\{-\alpha ye^{r(T-t)}\right\}\int_0^a\exp\left\{\alpha
xe^{r(T-t)}\right\}\overline{F}(x)dx \end{aligned}
\end{equation}
(\ref{HJB-post}) becomes:\newline
\begin{equation}  \label{pre-cauchy}
\begin{aligned} 0=&-\xi_t-\frac{1}{2}\beta^2\xi_{zz}-g(z)\xi_z\\
&+\mathop{\sup}\limits_{a\in\rr}\left\{c^{(a)}\alpha\xi e^{r(T-t)}
-\lambda\xi\alpha e^{r(T-t)}\int_0^a\exp\left\{\alpha
xe^{r(T-t)}\right\}\overline{F}(x)dx\right\}\\
&+\mathop{\sup}\limits_{l\in\rr}\left\{-\frac{1}{2}l^2\sigma^2(z)\alpha^2\xi
e^{2r(T-t)}+(\mu(z)-r)l\alpha\xi e^{r(T-t)}\right\}. \end{aligned}
\end{equation}
Then by the first-order maximization conditions we obtain the maximum
\begin{align}  \label{max}
& l^{*}(t,z)=\frac{(\mu(z)-r)}{\alpha\sigma^{2}(z)}e^{-r(T-t)},  \notag \\
& a^{*}(t)=\frac{e^{-r(T-t)}}{\alpha}\ln(1+\theta).
\end{align}
Now, we substitute $l^{*}$ and $a^{*}$ in (\ref{max}) into (\ref{pre-cauchy}%
) derive the following Cauchy problem:
\begin{equation}  \label{Cauchy problem}
\left\{ \begin{aligned} &0=\xi_t+\frac{1}{2}\beta^2\xi_{zz}+g(z)\xi_z
-\bigg(\left[(\eta-\theta)\lambda\mu_\infty
+(1+\theta)\lambda\int_0^{a^*}\overline{F}(x)dx\right]\alpha e^{r(T-t)}\\
&-\lambda\alpha e^{r(T-t)}\int_0^{a^*}\exp\left\{\alpha
xe^{r(T-t)}\right\}\overline{F}(x)dx
+\frac{(\mu(z)-r)^2}{2\sigma^2(z)}\bigg)\xi\\ &\xi(T,z)=1. \end{aligned}%
\right.
\end{equation}

\begin{thm}
(Existence and Uniqueness Theorem)\label{UAE} Assume that
\begin{align}
& \int_{0}^{\infty}\exp\left\{ 8\alpha xe^{rT}\right\} dF(x)<\infty ,
\label{integability 1} \\
& \int_{0}^{\infty}x\exp\left\{ 8\alpha xe^{rT}\right\} dF(x)<\infty ,
\label{integability 2}
\end{align}
Then the Cauchy problem given by (\ref{Cauchy problem}) has a unique
classical solution $\hat{\xi}$, which satisfies the following conditions:
\begin{align}
& |\hat{\xi}(t,z)|\leq C_{1}(1+|z|),  \label{H-continuous 1} \\
& |\hat{\xi}_{z}(t,z)|\leq C_{2}(1+|z|),  \label{H-continuous 2}
\end{align}
where $C_{1}$ and $C_{2}$ are constants.
\end{thm}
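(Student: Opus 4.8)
The plan is to solve the linear parabolic Cauchy problem (\ref{Cauchy problem}) through its Feynman--Kac representation. First I would isolate the $z$--dependence of the zeroth--order coefficient by writing it as $\Phi(t,z)=A(t)+\frac{(\mu(z)-r)^{2}}{2\sigma^{2}(z)}$, where
\[
A(t)=\Big[(\eta-\theta)\lambda\mu_{\infty}+(1+\theta)\lambda\int_{0}^{a^{*}}\bar F(x)\,dx\Big]\alpha e^{r(T-t)}-\lambda\alpha e^{r(T-t)}\int_{0}^{a^{*}}\exp\{\alpha x e^{r(T-t)}\}\bar F(x)\,dx
\]
collects the reinsurance terms and depends on $t$ alone. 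Using $a^{*}(t)=\frac{e^{-r(T-t)}}{\alpha}\ln(1+\theta)$ together with the exponential--moment conditions (\ref{integability 1})--(\ref{integability 2}), I would check that $A$ and its $t$--derivative are finite and continuous on $[0,T]$ (the weighted integrals over the bounded interval $[0,a^{*}]$ being controlled by those conditions), while part 3 of Hypothesis \ref{hypothesis 2} gives that $z\mapsto\frac{(\mu(z)-r)^{2}}{2\sigma^{2}(z)}$ is bounded with bounded first derivative. Hence $\Phi$ is continuous, bounded, and Lipschitz in $z$ uniformly in $t$.

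Next I would introduce the factor diffusion obtained by setting $\rho=0$,
\[
dZ_{s}=g(Z_{s})\,ds+\beta\,dW_{s},\qquad Z_{t}=z,
\]
which under Hypothesis \ref{hypothesis 2} (with $g$ uniformly Lipschitz and bounded and $\beta$ a nonzero constant) admits a unique strong solution with finite moments of every order, uniformly in $(t,z)$ on compacts. I would then define the candidate
\[
\hat\xi(t,z)=\mathbb{E}\Big[\exp\Big\{-\int_{t}^{T}\Phi(s,Z^{t,z}_{s})\,ds\Big\}\Big].
\]
At $t=T$ the integral is empty, so $\hat\xi(T,z)=1$, and since $\Phi$ is bounded the integrand is bounded, so $\hat\xi$ is well defined and in fact bounded; this already yields (\ref{H-continuous 1}) in the strong form $|\hat\xi(t,z)|\le C_{1}$.

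For the regularity and the gradient bound (\ref{H-continuous 2}) I would combine two ingredients. On the analytic side, the constant diffusion $\beta^{2}>0$ makes the operator uniformly elliptic, $g$ Lipschitz is locally H\"older, and $\Phi$ is continuous with the growth just established, so the linear parabolic existence theory quoted in the Appendix produces a classical $C^{1,2}$ solution in the class of functions of at most linear growth. On the probabilistic side, the Lipschitz flow estimate $|Z^{t,z}_{s}-Z^{t,z'}_{s}|\le e^{LT}|z-z'|$ (with $L$ the Lipschitz constant of $g$), together with the boundedness of $\Phi_{z}=\big(\frac{(\mu(z)-r)^{2}}{2\sigma^{2}(z)}\big)'$ from Hypothesis \ref{hypothesis 2}, shows directly that $\hat\xi$ is Lipschitz in $z$ uniformly in $t$, which delivers (\ref{H-continuous 2}).

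Uniqueness would follow by the standard verification argument: given any classical solution $\xi$ of (\ref{Cauchy problem}) satisfying (\ref{H-continuous 1})--(\ref{H-continuous 2}), apply It\^o's formula to $s\mapsto \exp\{-\int_{t}^{s}\Phi(u,Z_{u})\,du\}\,\xi(s,Z_{s})$ on $[t,T]$; the drift cancels by the equation, so the process is a local martingale, and the linear growth of $\xi$ and $\xi_{z}$ combined with the moment bounds on $Z$ upgrades it to a true martingale, whence taking expectations and using $\xi(T,\cdot)=1$ forces $\xi=\hat\xi$. The main obstacle is the regularity step: because $g$ is only Lipschitz rather than $C^{1}$, producing a genuine $C^{1,2}$ solution and rigorously differentiating the representation require care (local Schauder estimates, or mollifying $g$ and passing to the limit with stability of the representation), and it is precisely in controlling $A(t)$ and $A'(t)$ that the exponential--moment hypotheses (\ref{integability 1})--(\ref{integability 2}) enter.
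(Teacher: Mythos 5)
Your proposal is correct, and it takes a partly different route from the paper. The paper's proof is purely analytic: it verifies in two steps that the potential term $h(t,z)=h_{1}(t)-h_{2}(t)+h_{3}(z)$ is bounded (Step 1) and uniformly H\"older continuous on compact subsets of $\mathbb{R}\times[0,T]$ (Step 2) --- using (\ref{integability 1})--(\ref{integability 2}) together with the mean value theorem to control $h_{2}$ and its $t$-increments, and Hypothesis \ref{hypothesis 2} for $h_{3}$ --- and then reads existence, uniqueness, and both bounds (\ref{H-continuous 1})--(\ref{H-continuous 2}) directly off Friedman's result, Theorem \ref{A.1} in the Appendix. Your verification of the coefficient hypotheses (boundedness of $\Phi$, Lipschitz continuity in $z$ from Hypothesis \ref{hypothesis 2}, regularity in $t$ through $A$ and $A'$) is essentially the paper's Steps 1--2 in different notation, and like the paper you still rely on the appendix theorem to produce the classical $C^{1,2}$ solution; where you genuinely diverge is afterwards: you identify that solution with the Feynman--Kac functional $\hat\xi(t,z)=\mathbb{E}[\exp\{-\int_{t}^{T}\Phi(s,Z^{t,z}_{s})ds\}]$ via the It\^o/martingale verification argument, and extract the estimates probabilistically. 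This buys strictly more than the paper states: a constant bound $|\hat\xi|\leq C_{1}$ rather than linear growth, a uniform Lipschitz bound on $\hat\xi$ in $z$ (hence a bounded $\hat\xi_{z}$) from the additive-noise flow estimate $|Z^{t,z}_{s}-Z^{t,z'}_{s}|\leq e^{LT}|z-z'|$, a transparent uniqueness mechanism in the whole linear-growth class, and a representation that is reused implicitly later (the comparison $0<\tilde\xi\leq\hat\xi$ in Theorem \ref{Existence Theorem} is immediate from positivity of the exponential functional). The cost is the extra identification step, whose delicacy ($g$ merely Lipschitz, so one cannot naively differentiate the representation) you correctly flag and correctly circumvent by routing differentiability through the PDE solution rather than through the expectation. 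One remark applying to both arguments: since $a^{*}(t)\leq\ln(1+\theta)/\alpha$ and $\alpha x e^{r(T-t)}\leq\ln(1+\theta)$ on $[0,a^{*}(t)]$, the integrals defining $A$ and $A'$ run over a bounded interval with an integrand bounded by $1+\theta$ (respectively by an elementary constant), so the exponential-moment conditions (\ref{integability 1})--(\ref{integability 2}) are not actually needed at this stage --- they become essential only in the verification arguments of Theorem \ref{Post-Strategy}; your invocation of them here mirrors the paper's own proof and is harmless, but superfluous.
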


\begin{proof}
: In order to prove this theorem, first we verify that the Cauchy problem
given by (\ref{Cauchy problem}) satisfies the conditions of Theorem \ref{A.1}
(see Appendix).

\textbf{Step 1}. Since $\beta$ is constant, then it is Lipschitz continuous,
H$\ddot{o}$lder continuous, and the operator $\frac{1}{2}\beta^{2}\partial
^{2}_{zz}$ is uniformly elliptic. By Hypothesis \ref{coefficient}, we know
that $g(z)$ is bounded and uniformly Lipschitz continuous.\newline
Now we prove that
\begin{align*}
h(t,z):= & \underbrace{\left[ (\eta-\theta)\lambda\mu_{\infty}+(1+\theta
)\lambda\int_{0}^{a^{*}}\overline{F}(x)dx\right] \alpha e^{r(T-t)}}%
_{h_{1}(t)} \\
& -\underbrace{\lambda\alpha e^{r(T-t)}\int_{0}^{a^{*}}\exp\left\{ \alpha
xe^{r(T-t)}\right\} \overline{F}(x)dx}_{h_{2}(t)} +\underbrace{\frac {%
(\mu(z)-r)^{2}}{2\sigma^{2}(z)}}_{h_{3}(z)}
\end{align*}
is bounded and uniformly H$\ddot{o}$lder continuous in compact subsets of $%
\mathbb{R}\times[0,T]$. By Hypothesis \ref{coefficient}, it is easy to check
that the last term $h_{3}(z)$ is bounded. The first term $h_{1}(t)$ is
bounded by $(1+\eta)\lambda\mu_{\infty}\alpha e^{rT}$ . In order to prove $%
h_{2}(t)$ is bounded, we observe that
\begin{align*}
h_{2}(t) & =\left| \lambda\alpha e^{r(T-t)}\int_{0}^{a^{*}}\exp\left\{
\alpha xe^{r(T-t)}\right\} \overline{F}(x)dx\right| \leq\lambda\alpha
e^{rT}\left\{ \left| \int_{0}^{a^{*}}\exp\{\alpha xe^{r(T-t)}\}\overline{F}%
(x)dx\right| \right\} \\
& \leq\lambda\alpha e^{rT}\left\{ \left| \int_{0}^{D}\exp\left\{ \alpha
xe^{r(T-t)}\right\} dx\right| +\left| \int_{0}^{D}\exp\left\{ \alpha
xe^{r(T-t)}\right\} F(x)dx\right| \right\} \\
& \leq2\lambda\alpha e^{rT}\int_{0}^{D}\exp\left\{ \alpha
xe^{r(T-t)}\right\} dF(x)\leq2\lambda\alpha
e^{rT}\int_{0}^{\infty}\exp\left\{ \alpha xe^{r(T-t)}\right\} dF(x) \\
& \le\infty
\end{align*}
thus $h(t,z)$ is bounded. \newline

\textbf{Step 2}. Now we prove that $h(z,t)$ is uniformly H$\ddot{o}$lder
continuous in compact subsets of $\mathbb{R}\times[0,T]$. For $h_{1}(t)$,
use the mean value theorem to obtain that for all $(t,t_{0})\in[0,T]\times[%
0,T]$:
\begin{align*}
|h_{1}(t)-h_{1}(t_{0})|= & \alpha(\theta-\eta)\lambda\mu_{\infty}\left|
e^{r(T-t)}-e^{r(T-t_{0})}\right| \\
& +(1+\theta)\lambda\alpha\left| \int_{0}^{a^{*}(t)}\overline{F}%
(x)dxe^{r(T-t)} -\int_{0}^{a^{*}(t_{0})}\overline{F}(x)dxe^{r(T-t_{0})}%
\right| \\
& \leq\left[ \alpha(\theta-\eta)\lambda\mu_{\infty}e^{rT}+(1+\theta
)\lambda\alpha e^{rT}\right] \left| t-t_{0}\right| ,
\end{align*}
then $h_{1}(t)$ is uniformly H$\ddot{o}$lder continuous.\newline
For $h_{2}(t)$, the mean value theorem implies that there exists $t_{1}\in
[t_{0},t]$ such that:
\begin{align*}
|h_{2}(t)-h_{2}(t_{0})| & =|\lambda\alpha
e^{r(T-t)}\int_{0}^{a^{*}(t)}\exp\left\{ \alpha xe^{r(T-t)}\right\}
\overline{F}(x)dx \\
& -\lambda\alpha e^{r(T-t_{0})}\int_{0}^{a^{*}(t_{0})}\exp\left\{ \alpha
xe^{r(T-t_{0})}\right\} \overline{F}(x)dx| \\
& =\bigg|-\lambda\alpha r e^{r(T-t_{1})}\int_{0}^{a^{*}(t_{1})}\exp\left\{
\alpha xe^{r(T-t_{1})}\right\} \overline{F}(x)dx \\
& -\lambda\alpha e^{r(T-t_{1})}\left[ \alpha
r\int_{0}^{a^{*}(t_{1})}x\exp\left\{ \alpha xe^{r(T-t_{1})}\right\}
\overline{F}(x)dx\right] \\
& +\exp\left\{ \alpha a^{*}{t_{1}}e^{r(T-t_{1})}\overline{F}(a^{*}(t_{1}))%
\frac{da^{*}(t_{1})}{dt_{1}}\right\} |t-t_{0}|\bigg| \\
& \leq\{r|h_{2}(t_{1})|+2re^{2rT}\int_{0}^{\infty}\alpha x\exp\left\{ \alpha
xe^{rT}\right\} dF(x) \\
& +2(1+\theta)\frac{r}{\alpha}e^{rT}\ln(1+\theta)\}|t-t_{0}|<\infty,
\end{align*}
We get that $h_{2}(t)$ is uniformly Lipschitz continuous in $[0,T]$. By
Hypothesis \ref{coefficient}, $h^{\prime}_{3}(z)$ is bounded, then $h_{3}(z)$
is uniformly H$\ddot{o}$lder continuous, i.e., for all $(z,z_{0})\in \mathbb{%
R}^{2}$\newline
\begin{equation*}
|h_{3}(z)-h_{3}(z_{0})|\leq C|z-z_{0}|^{1/2}.
\end{equation*}
Then $h(t,z)$ is uniformly H$\ddot{o}$lder continuous in compact subsets of $%
\mathbb{R\times}[0,T]$. So the Cauchy problem (\ref{Cauchy problem}) has a
unique solution $\hat{\xi}(t,z)$ which satisfies (\ref{H-continuous 1}) and (%
\ref{H-continuous 2}).
\end{proof}

The next theorem relates the value function with the HJB equation (\ref%
{HJB-post}).

\begin{thm}
(Post-Default Strategy). \label{Post-Strategy} If (\ref{integability 1}), (%
\ref{integability 2}) are satisfied, then the value function (when $h=1$)
defined by (\ref{HJB-post}) has the form:
\begin{equation}
V(t,y,z,1)=-\hat{\xi}(t,z)\exp\left\{ -\alpha ye^{r(T-t)}\right\} ,
\end{equation}
where $\hat{\xi}(t,z)$ is the unique solution of (\ref{Cauchy problem}), and

\begin{equation}
\left\{ \begin{aligned}
&l^*(t,z)=\frac{\mu(z)-r}{\alpha\sigma^2(z)}e^{-r(T-t)},\\ &m^*(t)=0,\\
&a^*(t)=\frac{\ln(1+\theta)}{\alpha}e^{-r(T-t)}, \end{aligned}\right.
\end{equation}
is the optimal reinsurance-investment strategy.
\end{thm}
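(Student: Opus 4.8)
The plan is to treat this as a verification result resting on the two theorems already in hand. The candidate $f(t,y,z)=-\hat\xi(t,z)\exp\{-\alpha ye^{r(T-t)}\}$ from (\ref{post-solution}) was reverse-engineered so that, once the first-order maximizers (\ref{max}) are inserted, the residual of (\ref{HJB-post}) collapses precisely to the Cauchy problem (\ref{Cauchy problem}); Theorem \ref{UAE} then supplies a unique classical solution $\hat\xi$ obeying the linear-growth bounds (\ref{H-continuous 1})--(\ref{H-continuous 2}). First I would record that $f$ is a genuine classical solution of (\ref{HJB-post}), and then invoke the Verification Theorem (Theorem \ref{Verification Theorem}) with the candidate $\pi^{*}=(l^{*},0,a^{*})$ to identify $f$ with $V(\cdot,\cdot,\cdot,1)$ and to certify optimality.

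The first step is the direct check that $f$ solves (\ref{HJB-post}). Substituting the derivatives of (\ref{post-solution}) turns the inner supremum into the problem (\ref{pre-cauchy}): since $f_{yy}<0$ the $l$-quadratic is strictly concave with unique maximizer $l^{*}(t,z)=(\mu(z)-r)e^{-r(T-t)}/(\alpha\sigma^{2}(z))$, while the first-order condition in $a$ forces $\exp\{\alpha a e^{r(T-t)}\}=1+\theta$, i.e. $a^{*}(t)=e^{-r(T-t)}\ln(1+\theta)/\alpha$. Because the post-default Hamiltonian (\ref{HJB-post}) contains no $m$-dependent terms (the bond is not traded after $\tau$), the bond holding is irrelevant and I set $m^{*}=0$. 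Feeding $(l^{*},a^{*})$ back in reduces (\ref{HJB-post}) exactly to (\ref{Cauchy problem}), so $\hat\xi$ solving the Cauchy problem makes $f$ satisfy the HJB equation, and the normalization $\hat\xi(T,z)=1$ gives $f(T,y,z)=-e^{-\alpha y}=U(y)$.

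The substantive part is verifying that $f$ and $\pi^{*}$ meet the hypotheses (\ref{assumption 1})--(\ref{assumption 3}) of the Verification Theorem, and this is where I expect the real work. Using the exponential form of $f$, the jump increment in (\ref{assumption 1}) factors as $\hat\xi(t,Z_{t})\exp\{-\alpha Y^{\pi}_{t-}e^{r(T-t)}\}\big(\exp\{\alpha\min(x,a)e^{r(T-t)}\}-1\big)$, so its $L^{2}$-norm is bounded by a product of a moment of $\exp\{-2\alpha Y^{\pi}_{t-}e^{r(T-t)}\}$ and the deterministic integral $\int_{0}^{\infty}(\exp\{\alpha\min(x,a)e^{r(T-t)}\}-1)^{2}dF(x)$; the exponential weight $\exp\{8\alpha xe^{rT}\}$ in (\ref{integability 1})--(\ref{integability 2}) is exactly what lets a Cauchy--Schwarz split between the wealth moment (finite because $Y^{\pi}$ is driven by Brownian terms plus a compound-Poisson term with exponentially integrable jumps) and the jump term keep every such expectation finite. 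The bounds in (\ref{assumption 2}) follow similarly from (\ref{H-continuous 1})--(\ref{H-continuous 2}) together with the boundedness of $l^{*}(t,z)\sigma(z)=(\mu(z)-r)e^{-r(T-t)}/(\alpha\sigma(z))$, and the martingale property (\ref{assumption 3}) of the $M^{P}_{t}$-integral follows from the bounded compensated default intensity and the same square-integrability estimate.

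With the HJB equation, the terminal condition, and all integrability assumptions confirmed, Theorem \ref{Verification Theorem} delivers $V(t,y,z,1)=f(t,y,z)=-\hat\xi(t,z)\exp\{-\alpha ye^{r(T-t)}\}$ and certifies $\pi^{*}=(l^{*},0,a^{*})$ as the optimal reinsurance-investment strategy. The main obstacle is the moment control just described: one must show that along the optimal, and indeed every admissible, wealth process the exponential-in-$y$ functional $f$ has finite second moments despite the compound-Poisson jumps, which is precisely the reason the strengthened conditions (\ref{integability 1})--(\ref{integability 2}) carrying the weight $\exp\{8\alpha xe^{rT}\}$ are imposed.
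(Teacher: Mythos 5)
Your proposal is correct and follows essentially the same route as the paper: plug the ansatz $f(t,y,z)=-\hat\xi(t,z)\exp\{-\alpha y e^{r(T-t)}\}$ into (\ref{HJB-post}), obtain $(l^{*},a^{*})$ from the first-order conditions (with $m^{*}=0$ since the bond is untraded post-default), reduce to the Cauchy problem (\ref{Cauchy problem}) solved by Theorem \ref{UAE}, and then verify hypotheses (\ref{assumption 1})--(\ref{assumption 3}) of Theorem \ref{Verification Theorem} via Cauchy--Schwarz splits in which the weight $e^{8\alpha x e^{rT}}$ of (\ref{integability 1})--(\ref{integability 2}) controls the compound-Poisson exponential moments, exactly as the paper does (it additionally makes explicit the exponential-martingale bound for $\mathbb{E}[\exp\{-4\alpha Y_t\}]$, the fourth-moment bound on $Z_t$ from Badaoui--Fern\'andez, and the reduction from $r\neq 0$ to $r=0$ via $\widetilde{Y}_t=e^{r(T-t)}Y_t$, details your sketch leaves implicit but which pose no obstacle).
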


\begin{proof}
: We have already checked that
\begin{equation}
f(t,y,z)=-\hat{\xi}(t,z)\exp\left\{ -\alpha ye^{r(T-t)}\right\} ,
\end{equation}
solves the HJB equation (\ref{HJB-post}). To prove that $f(t,y,z)$ is the
true value function, we shall verify that assumptions (\ref{assumption 1})-(%
\ref{assumption 2}) of the Theorem \ref{Verification Theorem} are satisfied
by $f(t,y,z)$.

\textbf{Step 1}. We consider the case in which $r = 0$. Let $\pi\in \mathcal{%
A}$ be an admissible strategy, then:
\begin{align*}
& \int_{0}^{\infty}\mathbb{E}\left|
f(t,Y_{t}^{\pi}-\min(x,a),Z_{t})-f(t,Y_{t}^{\pi},Z_{t})\right| ^{2}dF(x) \\
& =\int_{0}^{a}\mathbb{E}\left| -\hat{\xi} e^{-\alpha(Y_{t}^{\pi}-x})+\hat {%
\xi} e^{-\alpha Y_{t}^{\pi}}\right| ^{2}dF(x)+\int_{a}^{\infty}\mathbb{E}%
\left| -\hat{\xi} e^{-\alpha(Y_{t}^{\pi}-a})+\hat{\xi} e^{-\alpha
Y_{t}^{\pi}}\right| ^{2}dF(x) \\
& =\int_{0}^{a}\left( e^{\alpha x}-1\right) ^{2}dF(x)\mathbb{E}\left[ \hat {%
\xi}^{2}(t,Z_{t})\exp\left\{ -2\alpha Y_{t}^{\pi}\right\} \right]
+\int_{a}^{\infty}\left( e^{\alpha a}-1\right) ^{2}dF(x)\mathbb{E}[\hat{\xi}%
^{2}(t,Z_{t})\exp\{-2\alpha Y_{t}^{\pi} \}].
\end{align*}
To get condition (\ref{assumption 1}), we need only obtain an estimate of:
\begin{equation*}
\mathbb{E}[\hat{\xi}^{2}(t,Z_{t})\exp\{-2\alpha Y_{t}^{\pi}\}].
\end{equation*}
We observe that
\begin{align*}
& \mathbb{E}[\hat{\xi}^{2}(t,Z_{t})\exp\{-2\alpha Y_{t}^{\pi}\}]\leq
C_{1}^{2}\mathbb{E}[(1+|Z_{t}|)^{2}\exp\{-2\alpha Y_{t}^{\pi}\}] \\
& \leq C_{1}^{2}\{\mathbb{E}[(1+|Z_{t}|)^{4}]\}^{1/2}\{\mathbb{E}%
[\exp\{-4\alpha Y_{t}^{\pi}\}]\}^{1/2},
\end{align*}
and by Theorem A.2 in Badaoui and Fern\'{a}ndez (2013) \cite{BF}
\begin{equation*}
\mathbb{E}\bigg(\mathop{\sup}\limits_{0\leq t\leq T}Z_{t}^{4}\bigg)\leq
C_{2}(1+|z|^{4}).
\end{equation*}

So we can get that
\begin{align*}
& \{\mathbb{E}[(1+|Z_{t}|)^{4}]\}^{1/2}\leq\{\mathbb{E}(\sqrt{%
2(1+|Z_{t}|)^{2}})^{4}\}^{1/2} \\
& \leq\{4\mathbb{E}[(1+|Z_{t}|)^{4}]\}^{1/2}\leq2(1+C_{3}(1+|z|^{4}))^{1/2},
\end{align*}

From (\ref{risk process})we have
\begin{align*}
\mathbb{E}[\exp(-4\alpha Y_{t})] & \leq\mathbb{E}\left[ \exp\{-4\alpha%
\int_{0}^{t}l(s)\sigma(Z_{s})dW_{1s}
+4\alpha\sum_{i=1}^{N_{t}}\min(X_{i},a)\}\right] \\
& =\mathbb{E}\left[ \exp\left\{ \frac{1}{2}L_{t}
+16\alpha^{2}\int_{0}^{t}l^{2}(s)\sigma^{2}(Z_{s})ds
+4\alpha\sum_{i=1}^{N_{t}}\min(X_{i},a)\right\} \right] \\
& \leq e^{16\alpha^{2}C_{4}}\mathbb{E}\left[ \exp\left\{ \frac{1}{2}L_{t}
+4\alpha\sum_{i=1}^{N_{t}}\min(X_{i},a)\right\} \right] \\
& \leq e^{16\alpha^{2}C_{4}}\left\{ \mathbb{E}\left[ \exp\left\{
L_{t}\right\} \right] \right\} ^{1/2} \left\{ \mathbb{E}\left[ \exp\left\{
8\alpha \sum_{i=1}^{N_{t}}\min(X_{i},a) \right\} \right] \right\} ^{1/2}.
\end{align*}

where$L_{t}=-8\alpha\int_{0}^{t}l(s)\sigma(Z_{s})dW_{1s}
-32\alpha^{2}\int_{0}^{t}l^{2}(s)\sigma^{2}(Z_{s})ds.$

Since $\exp\{L_{t}\}$ is a martingale, we obtain:
\begin{align*}
\mathbb{E}[\exp(-4\alpha Y_{t})] & \leq e^{16\alpha^{2}C_{4}}\{\mathbb{E}%
[\exp\{8\alpha\sum_{i=1}^{N_{t}}\min(X_{i},a)\}]\}^{1/2} \\
& \leq e^{16\alpha^{2}C_{4}}\exp\{\frac{\lambda t}{2}(e^{8a\alpha}
-8\alpha\int_{0}^{a}e^{8a\alpha}F(x)dx)\} \\
&<\infty,
\end{align*}
which proves (\ref{assumption 1}).

\textbf{Step 2}. In order to prove conditions (\ref{assumption 2}), we
observe that:\newline
\begin{align*}
\mathbb{E}|f_{y}(s,Y_{s},Z_{s})|^{2}\leq C_{5}^{2}\mathbb{E}%
[(1+|Z_{t}|)^{4}\exp\{-4\alpha Y_{s}^{\pi}\}]
\end{align*}
and
\begin{align*}
\mathbb{E}|f_{z}(t,Y_{t},Z_{t})|^{2}\leq C_{6}^{2}\mathbb{E}%
[(1+|Z_{t}|)^{4}\exp\{-4\alpha Y_{s}^{\pi}\}].
\end{align*}
Then by the same arguments as above, we get conditions (\ref{assumption 2})
and (\ref{assumption 3}). For the case in which the interest rate $r\neq0$,
let $\widetilde{Y}_{t}^{\pi}=e^{r(T-t)}Y_{t}^{\pi}$. An application of It$%
\hat{o}$'s formula shows that $\widetilde{Y}_{t}^{\pi}$ satisfies the
following SDE:\newline
\begin{align*}
d\widetilde{Y}_{t}^{\pi}= & e^{r(T-t)} \bigg[(\eta-\theta)\lambda\mu_{\infty
}+(1+\theta)\lambda\int_{0}^{a}\overline{F}(x)dx+Y(t)r(Z_{t})+(\mu(Z_{t}) \\
& -r(Z_{t}))l(t)\bigg]dt+e^{r(T-t)}l(t)\sigma(Z_{t})dW_{1t}-e^{r(T-t)}d%
\left( \sum_{i=1}^{N_{t}}\min(X_{i},a)\right) ,
\end{align*}
the result can be derived in a similar way as in the first part of the proof.
\end{proof}

\subsection{Period before default}

In this subsection, we will focus on the pre-default case. When $h=0$, the
HJB equation (\ref{HJB}) transforms into
\begin{equation}  \label{HJB-pre}
\begin{aligned}
0=&J_t(t,y,z,0)+\mathop{\sup}\limits_{\pi\in\AA}\bigg\{\left[r(z)y+l(t)(%
\mu(z)-r(z))+c^{(a)}+m(t)\delta\right]J_y(t,y,z,0)\\
&+J_z(t,y,z,0)g(z)+\frac{1}{2}J_{yy}(t,y,z,0)l(t)^2\sigma(z)^2+%
\frac{1}{2}J_{zz}(t,y,z,0)\beta^2+J_{yz}(t,y,z,0)\beta\rho\sigma(z)l(t)\\
&+\lambda \left(EJ(t,y-\min(X_1,a),z,0)-EJ(t,y,z,0)\right)\\
&+\left(J(t,y-m(t)\zeta,z,1)-J(t,y,z,0)\right)h^P\bigg\} \end{aligned}
\end{equation}
with terminal condition $J(T,y,z,0)=U(y)$.

According to Fleming and Soner (1993), if the optimal value function $%
V(t,y,z,0)\in C^{1,2,2}([0,T]\times\mathbb{R}\times\mathbb{R})$, then $V$
satisfies the HJB equation (\ref{HJB-pre}). To solve this equation, take as
a trial solution
\begin{equation}
\bar{f}(t,y,z)=J(t,y,z,0)=-\bar{\xi}(t,z)\exp\{-\alpha ye^{r(T-t)}\},
\end{equation}
with $\bar{\xi}(T,z)=1$. Then we have:

\begin{equation}  \label{pre-F}
\begin{aligned} &\bar{f}_t(t,y,z)=(-\bar{\xi}_t-\alpha yr\bar{\xi}
e^{r(T-t)})\exp\{-\alpha ye^{r(T-t)}\},\\ &\bar{f}_y(t,y,z)=\alpha\bar{\xi}
e^{r(T-t)} \exp\{-\alpha ye^{r(T-t)}\},\\
&\bar{f}_{yy}(t,y,z)=-\alpha^2\bar{\xi} e^{2r(T-t)} \exp\{-\alpha
ye^{r(T-t)}\},\\ &\bar{f}_z(t,y,z)=-\bar{\xi}_z \exp\{-\alpha
ye^{r(T-t)}\},\\ &\bar{f}_{zz}(t,y,z)=-\bar{\xi}_{zz} \exp\{-\alpha
ye^{r(T-t)}\}. \end{aligned}
\end{equation}
and
\begin{equation}  \label{E_F}
\begin{aligned} &E[\bar{f}(t,y-\min(X_1,a),z)-\bar{f}(t,y,z)]\\
&=-\bar{\xi}\alpha e^{r(T-t)}\exp\{-\alpha
ye^{r(T-t)}\}\int_0^a\exp\left\{\alpha xe^{r(T-t)}\right\}\overline{F}(x)dx,
\end{aligned}
\end{equation}

\begin{equation}  \label{default_F}
\begin{aligned} &({f}(t,y-m(t)\zeta,z)-\bar{f}(t,y,z))h^P\\
&=-\hat{\xi}(z,t)\exp\{-\alpha(y-m(t)\zeta)e^{r(T-t)}\}h^P
+\bar{\xi}(z,t)\exp\{-\alpha ye^{r(T-t)}\}h^P, \end{aligned}
\end{equation} where $\hat{\xi}$ is the unique classical solution of the Cauchy problem (%
\ref{Cauchy problem}).
Substituting the above formulas (\ref{pre-F})-(\ref{default_F}) into (\ref%
{HJB-pre}), when $\rho=0$, we have
\begin{equation}  \label{PDE1}
\begin{aligned}
0=&-\bar{\xi}_t-\frac{1}{2}\beta^2\bar{\xi}_{zz}-g(z)\bar{\xi}_z
+(\eta-\theta)\lambda\mu_\infty\alpha\bar{\xi}e^{r(T-t)}\\
&+\mathop{\sup}\limits_{l}\bigg\{(\mu(z)-r)\alpha \bar{\xi}
e^{r(T-t)}l-\frac{1}{2}\alpha^2\bar{\xi}e^{2r(T-t)}\sigma^2(z)l^2\bigg\}\\
&+\mathop{\sup}\limits_{m}\bigg\{m\delta\alpha\bar{\xi}
e^{r(T-t)}+(\bar{\xi}-e^{\alpha m\zeta e^{r(T-t)}}\hat{\xi})h^P\bigg\}\\
&+\mathop{\sup}\limits_{a}\bigg\{(1+\theta)\lambda\int_0^a\overline{F}(x)dx%
\alpha\overline{\xi}
e^{r(T-t)}-\lambda\alpha\bar{\xi}e^{r(T-t)}\int_0^a\exp\left\{\alpha
xe^{r(T-t)}\right\}\overline{F}(x)dx\bigg\}. \end{aligned}
\end{equation}
According to Theorem \ref{Post-Strategy}, the first-order conditions for a
regular interior maximization in (\ref{PDE1}) are

\begin{equation}  \label{pre-optimal}
\left\{ \begin{aligned}
&l^*(t,z)=\frac{\mu(z)-r}{\alpha\sigma^2(z)}e^{-r(T-t)},\\
&m^*(t,z)=\frac{\ln\frac{1}{\triangle}+\ln\overline{\xi}-\ln\hat{\xi}}{%
\alpha \zeta}e^{-r(T-t)},\\ &a^*(t)=\frac{\ln(1+\theta)}{\alpha}e^{-r(T-t)}.
\end{aligned}\right.
\end{equation}
where $\hat{\xi}$ is the unique classical solution of the Cauchy problem (%
\ref{Cauchy problem}).

We now insert (\ref{pre-optimal}) into (\ref{PDE1}), thereby obtaining
\begin{equation}  \label{PDE2}
\begin{aligned}
0=&\overline{\xi}_t+\frac{1}{2}\beta^2\overline{\xi}_{zz}+g(z)\overline{%
\xi}_z -\frac{h^P}{\Delta}\bar{\xi}\ln\bar{\xi}\\
&-\bigg\{\left[(\eta-\theta)\mu_\infty+(1+\theta)\int_0^{a^*(t)}%
\overline{F}(x)dx\right]\lambda\alpha e^{r(T-t)}\\ &-\lambda\alpha
e^{r(T-t)}\int_0^{a^*(t)}\exp\{\alpha xe^{r(T-t)}\}\overline{F}(x)dx\\
&+\frac{(\mu(z)-r)^2}{2\sigma^2(z)}
+\left(1-\frac{1}{\Delta}+\frac{1}{\Delta}\ln\frac{1}{\Delta}\right)h^P
-\frac{h^P}{\Delta}\ln\hat{\xi}\bigg\}\bar{\xi}, \end{aligned}
\end{equation}

We let
\begin{align*}
M(t,z) & =\bigg[(\eta-\theta)\mu_{\infty}+(1+\theta)\int_{0}^{a^{*}(t)}%
\overline{F}(x)dx\bigg]\lambda\alpha e^{r(T-t)} \\
& -\lambda\alpha e^{r(T-t)}\int_{0}^{a^{*}(t)}\exp\{\alpha xe^{r(T-t)}\}%
\overline{F}(x)dx \\
& +\frac{(\mu(z)-r)^{2}}{2\sigma^{2}(z)} +\underbrace{(1-\frac{1}{\Delta }+%
\frac{1}{\Delta}\ln\frac{1}{\Delta})h^{P}}_{I} -\frac{h^{P}}{\Delta }%
\underbrace{\ln\hat{\xi}}_{\hat{u}}, \\
& =h(t,z)+I-\frac{h^{P}}{\Delta}\hat{u}.
\end{align*}
where $h(t,z)$ is defined in the proof of Theorem \ref{UAE}. Then, according
to hypothesis \ref{coefficient}, $M(t,z)$ is bounded and (\ref{PDE2})
becomes
\begin{equation}  \label{PDE3}
\begin{aligned}
0=&\overline{\xi}_t+\frac{1}{2}\beta^2\overline{\xi}_{zz}+g(z)\overline{%
\xi}_z -\frac{h^P}{\Delta}\bar{\xi}\ln\bar{\xi}-M(t,z)\bar{\xi}.
\end{aligned}
\end{equation}

In order to solve this PDE, we make variable substitution $\bar{u}=\ln\bar {%
\xi}$, then $\bar{u}(T,z)=0$ and we have
\begin{equation}  \label{new varible}
\begin{aligned} &\bar{\xi}=e^{\bar{u}},\\
&\bar{\xi}_t={\bar{u}}_te^{\bar{u}},\\
&\bar{\xi}_z={\bar{u}}_ze^{\bar{u}},\\
&\bar{\xi}_{zz}=({\bar{u}}_z^2+{\bar{u}}_{zz})e^{\bar{u}}, \end{aligned}
\end{equation}

Substituting the above formulas (\ref{new varible}) into (\ref{PDE3}), we
get
\begin{equation}  \label{PDE4}
\left\{ \begin{aligned}
&0=\bar{u}_t+\frac{1}{2}\beta^2(\bar{u}_{zz}+\bar{u}_z^2)+g(z)\bar{u}_z-%
\frac{h^P}{\Delta}\bar{u}-M(t,z).\\ &\bar{u}(T,z)=0 \end{aligned}\right.
\end{equation}
Eq. (\ref{PDE4}) is indeed a Cauchy initial value problem (CIVP).

Use the same transform
\begin{equation}  \label{U-solution}
u=\ln\xi
\end{equation}
we rewrite CIVP (\ref{Cauchy problem}) as
\begin{equation}  \label{PDE00}
\left\{ \begin{aligned}
&0=u_t+\frac{1}{2}\beta^2(u_{zz}+u_z^2)+g(z)u_z-h(t,z).\\ &u(T,z)=0.
\end{aligned}\right.
\end{equation}

In order to solve CIVP (\ref{PDE4}), we found that technical complications
in quasi-linear parabolic PDEs (\ref{PDE4}) are generated by the quadratic
growth of the gradient. Due to the nonlinearity of (\ref{PDE4}), we consider
the so-called super-sub solution method as in Birge, Bo and Capponi(2016),
see Bebernes and Schmitt(1977) and Bebernes and Schmitt (1979) for the
general theory in the parabolic case, and establish the so-called ordered
pair of lower and upper solutions to the CIVP (\ref{PDE4}). The definition
of lower and upper solutions to the CIVP (\ref{PDE4}) is given as follows
(see also Bebernes and Schmitt (1979) and Birge, Bo and Capponi(2016)).

Let
\begin{equation}
\begin{aligned}
&L\upsilon(t,z)=\upsilon_t+\frac{1}{2}\beta^2\upsilon_{zz}+g(z)\upsilon_z-%
\frac{h^P}{\Delta}\upsilon\\
&G(t,z,\upsilon,p)=-\frac{1}{2}\beta^2p^2+M(t,z)\end{aligned}
\end{equation}

\begin{defn}
A continuous function $\varphi:(0,T)\times\mathbb{R}\rightarrow\mathbb{R}$
is called a lower solution of the CIVP(\ref{PDE4}) if $\varphi(T,z)\leq0$
for $z\in\mathbb{R}$, and for every $(z_{0},t_{0}) \in(0,T)\times\mathbb{R}$
there exists an open neighborhood $\mathcal{O}$ of $(z_{0},t_{0})$ such that
for $(t,z)\in\mathcal{O}\cap(0,T)\times\mathbb{R}$,
\begin{equation}
\begin{aligned} L\varphi \geq G(t,z,\varphi,\varphi_z). \end{aligned}
\end{equation}
If in the above expression the inequality sign is reversed, then $\varphi$
is called an upper solution of the CIVP (\ref{PDE4}). Let $\bar{\varphi}$
and $\underline{\varphi}$ be the upper and lower solution respectively. If $%
\underline{\varphi}(t,z)\leq\bar{\varphi}(t,z)$ for all $(t,z)\in [0,T]\times%
\mathbb{R}$, we call $(\underline{\varphi},\bar{\varphi})$ an ordered pair
of lower and upper solutions of the CIVP (\ref{PDE4}).
\end{defn}

We next construct lower and upper solutions to the CIVP (\ref{PDE4}). In
Theorem \ref{UAE}, we have already proven that $\hat{\xi}$ is the
nonnegative classical solution of the CIVP (\ref{Cauchy problem}), so $\hat{u%
}$ is the classical solution of the CIVP (\ref{PDE00}). Let

\begin{equation}  \label{upper solution}
\bar{\varphi}(t,z)=\hat{u}(t,z),
\end{equation}
we have

\begin{equation}
\begin{aligned}
&L\bar{\varphi}=\bar{\varphi}_t+\frac{1}{2}\beta^2\bar{\varphi}_{zz}++g(z)%
\bar{\varphi}_z-\frac{h^P}{\Delta}\bar{\varphi}
=-\frac{1}{2}\beta^2{\bar{\varphi}}_z^2+M(t,z)-I\\
&G(t,z,\bar{\varphi},\bar{\varphi}_z)=-\frac{1}{2}\beta^2\bar{%
\varphi}_z^2+M(t,z) \end{aligned}
\end{equation}
Since $1-x\leq e^{-x}$ for any real number we get that $I\geq0$, so $\bar{%
\varphi}$ is an upper solution of the CIVP (\ref{PDE4}).

Let
\begin{equation}  \label{lower solution}
\underline{\varphi}(t,z)=\hat{u}(t,z)-\frac{\Delta }{h^{P}}I,
\end{equation}
so
\begin{equation}
\begin{aligned}
&L\underline{\varphi}=-\frac{1}{2}\beta^2\hat{u}_z^2+M(t,z)\\
&G(t,z,\underline{\varphi},{\underline{\varphi}}_z)=-\frac{1}{2}\beta^2%
\hat{u}_z^2+M(t,z) \end{aligned}
\end{equation}
then we have $L\underline{\varphi}= G(t,z,\underline{\varphi},{\underline{%
\varphi}}_{z})$ and $\underline{\varphi}(T,z)\le0$, it follows that $%
\underline{\varphi}$ is a lower solution to the CIVP (\ref{PDE4}). Moreover,
$(\underline{\varphi},\bar{\varphi})$ is an ordered pair of lower and upper
solution of the CIVP (\ref{PDE4}). We are now ready to give the main result
of the paper, which establishes the existence of classical solutions to the
CIVP (\ref{PDE4}).

\begin{thm}
\label{Existence Theorem}(Existence Theorem)  If (\ref{integability 1}), (%
\ref{integability 2}) and Hypothesis (\ref%
{coefficient}-\ref{hypothesis 2}) are satisfied. Then there exists a
classical solution $\tilde{u}$ to CIVP(\ref{PDE4}). Moreover, it holds that
\begin{equation}  \label{lower-up bound}
\underline{\varphi}(t,z)\leq\tilde{u}(t,z)\leq \bar{\varphi}(t,z)
\end{equation}
where $\bar{\varphi}$ and $\underline{\varphi}$ are defined in (\ref{upper
solution}) and (\ref{lower solution}), respectively. Additionally the Cauchy
problem given by (\ref{PDE3}) exists a classical solution $\tilde{\xi}$,
which satisfies the following conditions:
\begin{align}
& |\tilde{\xi}(t,z)|\leq C_{7}(1+|z|),  \label{H-continuous 3} \\
& |\tilde{\xi}_{z}(t,z)|\leq C_{8}(1+|z|),  \label{H-continuous 4}
\end{align}
where $C_{7}$ and $C_{8}$ are constants.
\end{thm}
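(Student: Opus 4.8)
The plan is to establish existence through the super-sub solution method for quasi-linear parabolic equations, in the spirit of Bebernes and Schmitt (1979) and Birge, Bo and Capponi (2016). The groundwork is already in place: we have exhibited an ordered pair $(\underline{\varphi},\bar{\varphi})$ of lower and upper solutions to the CIVP (\ref{PDE4}), with $\underline{\varphi}\leq\bar{\varphi}$ because they differ by the nonnegative constant $\frac{\Delta}{h^{P}}I$. What remains is to verify the structural hypotheses of the general localized existence theorem so that a classical solution $\tilde{u}$ can be trapped between $\underline{\varphi}$ and $\bar{\varphi}$, and then to transfer the conclusion back to $\tilde{\xi}=e^{\tilde{u}}$ and read off the growth bounds (\ref{H-continuous 3})--(\ref{H-continuous 4}).

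First I would check the coefficient conditions demanded by the super-sub machinery. The operator $L$ has uniformly elliptic principal part $\frac{1}{2}\beta^{2}\partial_{zz}^{2}$ with constant coefficient $\beta\neq0$; the drift $g$ is uniformly Lipschitz and bounded by Hypothesis \ref{coefficient}; the zeroth-order coefficient $\frac{h^{P}}{\Delta}$ is a nonnegative constant; and $M(t,z)=h(t,z)+I-\frac{h^{P}}{\Delta}\hat{u}$ is bounded and uniformly H\"older continuous on compact subsets of $\rr\times[0,T]$, exactly as shown for $h(t,z)$ in the proof of Theorem \ref{UAE}, since $I$ is constant and $\hat{u}=\ln\hat{\xi}$ inherits its regularity from the classical solution of (\ref{Cauchy problem}) together with the estimates (\ref{H-continuous 1})--(\ref{H-continuous 2}). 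This places the linear data of (\ref{PDE4}) in the admissible class for the localized result collected in the Appendix.

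The crux is the right-hand side $G(t,z,\upsilon,p)=-\frac{1}{2}\beta^{2}p^{2}+M(t,z)$, whose quadratic growth in the gradient $p$ is precisely the borderline Nagumo case the paper flags as the source of the technical difficulty. Here the two-sided order bounds are decisive: any solution squeezed between $\underline{\varphi}$ and $\bar{\varphi}$ inherits a local $L^{\infty}$ bound from the continuous functions $\hat{u}$ and $\underline{\varphi}$, so on each compact cylinder $G$ is dominated by a function of $|p|$ meeting the Nagumo condition. This supplies interior gradient estimates of Bernstein type, $\sup|\tilde{u}_{z}|\leq C$ on compacts, which, combined with the uniform sup bound, feed the Schauder/continuity-method argument of the super-sub solution theorem and yield a classical solution $\tilde{u}$ of (\ref{PDE4}) satisfying $\underline{\varphi}(t,z)\leq\tilde{u}(t,z)\leq\bar{\varphi}(t,z)$. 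Producing these gradient estimates in the presence of the quadratic term, so as to exclude gradient blow-up and close the fixed-point argument, is the main obstacle; everything upstream is routine verification of regularity.

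Finally I would transform back. Setting $\tilde{\xi}:=e^{\tilde{u}}$ and reversing the substitution (\ref{new varible}) shows $\tilde{\xi}$ is a classical solution of (\ref{PDE3}). The upper bound $\tilde{u}\leq\bar{\varphi}=\hat{u}=\ln\hat{\xi}$ gives $0<\tilde{\xi}\leq\hat{\xi}\leq C_{1}(1+|z|)$ by (\ref{H-continuous 1}), which is exactly (\ref{H-continuous 3}) with $C_{7}=C_{1}$. For the gradient bound I would use $\tilde{\xi}_{z}=\tilde{u}_{z}\,\tilde{\xi}$, combining the Bernstein estimate $|\tilde{u}_{z}|\leq C$ with the just-established linear bound on $\tilde{\xi}$ to obtain $|\tilde{\xi}_{z}|\leq C_{8}(1+|z|)$, which is (\ref{H-continuous 4}). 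This delivers both growth estimates and completes the theorem.
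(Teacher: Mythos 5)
Your setup (the ordered pair $(\underline{\varphi},\bar{\varphi})$, the regularity checks on $g$, $M(t,z)$, and the ellipticity of $\tfrac{1}{2}\beta^{2}\partial_{zz}^{2}$) matches the paper, but the central existence step contains a genuine gap, and the heuristic you offer for it is actually false. You claim that the two-sided order bounds give a local $L^{\infty}$ bound on the solution, ``so on each compact cylinder $G$ is dominated by a function of $|p|$ meeting the Nagumo condition.'' The Nagumo condition is a growth condition in the gradient variable $p$ alone, for $\upsilon$ ranging over the order interval and $(t,z)$ over a compact set; the sup bound on $\upsilon$ does nothing to tame the term $-\tfrac{1}{2}\beta^{2}p^{2}$. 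The best available bound is exactly $|G(t,z,\upsilon,p)|\leq K_{R}(1+|p|^{2})$, and the paper notes explicitly that $\Phi(s)=s^{2}$ fails Nagumo because $\lim_{s\to\infty}s^{2}/\Phi(s)=1$, not $\infty$. So Theorem 3 of Bebernes and Schmitt (1979) is \emph{not} applicable to $G$ directly, on compact cylinders or otherwise. Your fallback --- interior Bernstein gradient estimates feeding a Schauder/continuity-method argument --- is asserted rather than proved, and you yourself flag it as ``the main obstacle''; as written, the proposal never closes it.

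The paper closes this gap with a truncation scheme borrowed from Loc and Schmitt (2012), extending Nagumo to Bernstein--Nagumo conditions: for each $k\in\mathbb{N}$ one replaces $p$ by the truncation $h_{k}(p)$ (equal to $p$ for $|p|\leq k$, of norm $k$ otherwise) and considers $G_{k}(t,z,\upsilon,p)=-\tfrac{1}{2}\beta^{2}h_{k}(p)^{2}+M(t,z)$, which is bounded in $p$ and hence satisfies the Nagumo growth hypothesis. Theorem 3 of Bebernes--Schmitt then yields a classical solution $\tilde{u}_{k}$ of the truncated CIVP trapped in $[\underline{\varphi},\bar{\varphi}]$; since $G_{k}\to G$ pointwise, one extracts a subsequence $\tilde{u}_{k_{l}}$ converging uniformly on compact subsets of $[0,T]\times\mathbb{R}$ to a classical solution $\tilde{u}$ of (\ref{PDE4}), whose limit still lies in the order interval, giving (\ref{lower-up bound}). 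Your final transfer step also inherits the gap: you derive (\ref{H-continuous 4}) from $\tilde{\xi}_{z}=\tilde{u}_{z}\tilde{\xi}$ together with a \emph{uniform} bound $|\tilde{u}_{z}|\leq C$, but the Bernstein-type estimates you invoke would at best be local on compacts, so the global linear-growth bound does not follow as claimed. The conclusion $0<\tilde{\xi}\leq\hat{\xi}\leq C_{1}(1+|z|)$ from the upper solution is fine and coincides with the paper's argument for (\ref{H-continuous 3}).
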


\begin{proof}
We follow the proof in Theorem 4.2 of Birge, Bo and Capponi(2016). From the
above analysis we know that $(\underline{\varphi},\bar{\varphi})$ is an
ordered pair of lower and upper solution of the CIVP (\ref{PDE4}). Next, if $%
\tilde{u}$ is the classical solution to the CIVP (\ref{PDE4}), using an
invariance result (see, e.g. Lemma 1 of Bebernes and Schmitt (1979)), it
follows that $\tilde{u}(t,z)\in[\underline{\varphi}(t,z),\bar{\varphi}(t,z)]$
for all $(t,z)\in[0,T]\times\mathbb{R}$. Let $R>0$ be an arbitrary constant
and $B_{R}:=\{q:\in\mathbb{R}; |q|<R\}$. Therefore, for all $\upsilon \in[%
\underline{\varphi}(t,z),\bar{\varphi}(t,z)]$ and $(t,z)\in[0,T]\times \bar{B%
}_{R}$, we obtain that
\begin{equation}
\begin{aligned}\label{G bound} |G(t,z,\upsilon,p)|&\leq \frac{1}{2}
\beta^2p^2+|h(t,z)|+H+\frac{h^p}{\Delta}|\hat{u}(t,z)|\\ &\leq K_R(1+|p|^2)
\end{aligned}
\end{equation}
where $K_{R}>0$ is a generic constant which depends on $R$. This shows that
the cofficient $f$ admits the quadratic growth in $p$. However, $f$ fails to
satisfy a Nagumo type condition. (See Theorem 2 of Bebernes and Schmitt
(1979) where this condition is treated and it is required that $%
|f(t,y,\upsilon ,p)|\leq\Phi(|p|)$ for some positive continuous
nondecreasing function $\Phi$ such that $\mathop{\lim}\limits_{s\rightarrow%
\infty}\frac{s^{2}}{\Phi (s)}=\infty$. In our case $\Phi(s)=s^{2}$ does not
admit, given that $\mathop{\lim}\limits_{s\rightarrow\infty}\frac{s^{2}}{%
\Phi(s)}=1$.) Hence, Theorem 3 of Bebernes and Schmitt (1979) is not
applicable for our case. To overcome this, we adopt an approximation
technique used in Loc and Schmitt(2012) which extends the Nagumo conditions
to Bernstein-Nagumo conditions. The latter covers the quadratic growth
condition of $G$ in $p$ given in Eq. (\ref{G bound}). As in Loc and Schmitt
(2012), for $k\in \mathbb{N}$, we define a truncated function $h_{k}(p)$
acting on $p\in\mathbb{R}$ as

\begin{equation}
\begin{aligned} h_k(p)=\left\{\begin{aligned} &p, {\rm if} ~~~|p|\leq k,\\
&\frac{k}{|p|}, \end{aligned}\right. \end{aligned}
\end{equation}
Then we consider the following PDE given by
\begin{equation}  \label{truncate eq}
\begin{aligned}
(u_k)_t+\frac{1}{2}\beta^2(u_k)_{zz}+g(z)(u_k)_z-\frac{H^P}{%
\Delta}u_k-G_k(t,z,u_k,(u_k)_z)=0 \end{aligned}
\end{equation}
where $G_{k}(t,z,\upsilon,p):=-\frac{1}{2}\beta^{2}h_{k}(p)^{2}+M(t,z)$. It
can be easily seen that, for each $k\in\mathbb{N}$ and $R>0$, $%
G_{k}(t,z,\upsilon,p)$ satisfies the Nagumo growth condition in $p$ required
by theorem 3 of Bebernes and Schmitt(1979), for all $\upsilon\in [\underline{%
\varphi}(t,z),\bar{\varphi}(t,z)]$ with $(t,z)\in[0,T]\times \bar{B}_{R}$.
Then we can apply theorem 3 of Bebernes and Schmitt(1979), and deduce that
Eq. \ref{truncate eq} admits a solution $\tilde{u}_{k}(t,z)$, $(t,z)\in[0,T]%
\times\mathbb{R}$, in the classic sense for each $k\in \mathbb{N}$. Notice
that $G_{k}(t,z,\upsilon,p)\rightarrow G(t,z,\upsilon,p)$ pointwise as $%
k\rightarrow\infty$. Then we can extract a subsequence of $\tilde{u}%
_{k_{l}}(t,z)$ which converges uniformly on compact subsets of $[0,T]\times%
\mathbb{R}$ to a solution of the CIVP (\ref{HJB-pre}). Moreover the limit of
the above subsequence of $\tilde{u}_{k_{l}}(t,z)$ also lies in $[\underline{%
\varphi}(t,z),\bar{\varphi}(t,z)]$ for all $(t,z)\in [0,T]\times\mathbb{R}$.
We write the limit is $\tilde{u}(t,z)$ and $\tilde {\xi}(t,z)=e^{\tilde{u}%
(t,z)}$. From (\ref{lower-up bound}), we know that
\begin{equation}
e^{\frac{\Delta}{h^{P}}I}\hat{\xi}=e^{\underline{\varphi}(t,z)}\leq\tilde{%
\xi }(t,z)\leq e^{\bar{\varphi}(t,z)}=\hat{\xi}
\end{equation}
This completes the proof of the theorem.
\end{proof}

\begin{thm}
(Pre-Default Strategy). \label{Pre-Strategy} If (\ref{integability 1}), (\ref%
{integability 2}) are satisfied, then the value function (when $h=0$)
defined by (\ref{HJB-pre}) has the form:
\begin{equation}
V(t,y,z,0)=-\tilde{\xi}(t,z)\exp\left\{ -\alpha ye^{r(T-t)}\right\} ,
\end{equation}
The optimal investment strategy is given by $\tilde{\pi}_{t}^{*}=\pi
^{*}(t,Z_{t}-)$, where the optimal feedback control function is given as
follows:
\begin{equation}
\left\{ \begin{aligned}
&l^*(t,z)=\frac{\mu(z)-r}{\alpha\sigma^2(z)}e^{-r(T-t)},\\
&m^*(t,z)=\frac{\ln\tilde{\xi}(t,z)-\ln\hat{\xi}(t,z)+\ln\frac{1}{\Delta}}{%
\alpha\zeta}e^{-r(T-t)},\\ &a^*(t)=\frac{\ln(1+\theta)}{\alpha}e^{-r(T-t)}.
\end{aligned}\right.
\end{equation}
where $\hat{\xi}$ is the unique solution of CIVP (\ref{Cauchy problem}) and $%
\tilde{\xi}$ is the unique solution of DPE (\ref{PDE3}) with terminal condition $%
\tilde{\xi}(T,z)=1$.
\end{thm}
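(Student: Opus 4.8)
The plan is to mirror the proof of Theorem \ref{Post-Strategy}, now feeding the candidate solution into the verification machinery of Theorem \ref{Verification Theorem}. First I would recall that the Existence Theorem \ref{Existence Theorem} already supplies the classical solution $\tilde{\xi}=e^{\tilde{u}}$ of the DPE (\ref{PDE3}) with $\tilde{\xi}(T,z)=1$, together with the linear-growth bounds (\ref{H-continuous 3}) and (\ref{H-continuous 4}). By the derivation (\ref{pre-F})--(\ref{PDE4}), the candidate $\bar{f}(t,y,z)=-\tilde{\xi}(t,z)\exp\{-\alpha y e^{r(T-t)}\}$ is a classical solution of the HJB equation (\ref{HJB-pre}) with terminal data $\bar{f}(T,y,z)=U(y)$, and the interior supremum in (\ref{PDE1}) is attained precisely at the feedback controls $(l^{*},m^{*},a^{*})$ listed in (\ref{pre-optimal}). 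It therefore remains to certify $\bar{f}$ as the true value function by checking the integrability and martingale hypotheses (\ref{assumption 1})--(\ref{assumption 3}).

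The core estimates run essentially as in Theorem \ref{Post-Strategy}. Verifying (\ref{assumption 1}) reduces to bounding $\mathbb{E}[\tilde{\xi}^{2}(t,Z_{t})\exp\{-2\alpha Y_{t}^{\pi}\}]$; applying Cauchy--Schwarz together with the linear-growth estimate (\ref{H-continuous 3}) splits this into a fourth moment of $Z_{t}$, controlled by the bound $\mathbb{E}(\sup_{0\le t\le T}Z_{t}^{4})\le C(1+|z|^{4})$ of Badaoui and Fern\'{a}ndez, and an exponential moment $\mathbb{E}[\exp\{-4\alpha Y_{t}^{\pi}\}]$. The latter is handled by expressing the negative wealth exponential via the stochastic exponential of the Brownian integral against $dW_{1t}$ (a true martingale) and the compound-Poisson claims term, so that the integrability conditions (\ref{integability 1}) and (\ref{integability 2}) force finiteness. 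The two conditions in (\ref{assumption 2}) then follow from the same estimate combined with the gradient bound (\ref{H-continuous 4}) for $\tilde{\xi}_{z}$.

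The genuinely new ingredient, absent in the post-default case where $m^{*}=0$, is the nonzero corporate-bond position and its default jump, so the main obstacle is condition (\ref{assumption 3}). I would establish it by showing that the integrand
\[
J(t,Y_{t-}^{\pi}-m(t)\zeta,Z_{t},1)-J(t,Y_{t-}^{\pi},Z_{t-},0)=-\hat{\xi}(t,Z_{t})e^{-\alpha(Y_{t-}^{\pi}-m(t)\zeta)e^{r(T-t)}}+\tilde{\xi}(t,Z_{t})e^{-\alpha Y_{t-}^{\pi}e^{r(T-t)}}
\]
is square-integrable on $[0,T]$, whence its integral against the compensated default martingale $M_{t}^{P}$ is a true martingale. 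This couples the pre- and post-default solutions: both $\hat{\xi}$ and $\tilde{\xi}$ enter, but each obeys a linear-growth bound (from Theorems \ref{UAE} and \ref{Existence Theorem}), and the exponential-moment control of $Y_{t-}^{\pi}$ developed above applies after absorbing the deterministic displacement factor $e^{\alpha m(t)\zeta e^{r(T-t)}}$, which admissibility of $\pi$ keeps integrable. I expect this to be the delicate step precisely because the exponential integrability must survive the extra default-jump displacement.

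Once the three hypotheses are confirmed, Theorem \ref{Verification Theorem} yields $\bar{f}(u,y,z)\ge V(u,y,z,0)$. To upgrade this to equality I would substitute the feedback strategy $\pi^{*}(t,Z_{t-})=(l^{*},m^{*},a^{*})$ into the dynamics (\ref{risk process}) and check admissibility; here the sandwich bound (\ref{lower-up bound}) is decisive, since it shows $\ln\tilde{\xi}-\ln\hat{\xi}$ is bounded and hence $m^{*}$ is bounded, so the implied wealth process inherits the required exponential moments. Because this choice annihilates the drift in the compensated It\^{o} expansion, the verification inequality becomes an equality, identifying $\bar{f}$ with $V(\cdot,\cdot,\cdot,0)$ and certifying $(l^{*},m^{*},a^{*})$ as optimal, which completes the proof.
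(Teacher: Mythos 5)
Your proposal is correct and follows the paper's architecture almost exactly: plug the candidate $\bar f(t,y,z)=-\tilde\xi(t,z)\exp\{-\alpha y e^{r(T-t)}\}$ (with $\tilde\xi$ from Theorem \ref{Existence Theorem}) into the Verification Theorem, reuse the post-default estimates for (\ref{assumption 1})--(\ref{assumption 2}), and use the sandwich bound (\ref{lower-up bound}) to deduce
$0\le\frac{1-\Delta}{\alpha\zeta}e^{-r(T-t)}\le m^{*}(t,z)\le\frac{\ln\frac{1}{\Delta}}{\alpha\zeta}e^{-r(T-t)}$,
which is exactly how the paper controls the new bond-position term (the paper bounds $\mathbb{E}\exp\{4\alpha\int_0^t m(s)(1-H_s)\zeta\,dM_s^P\}$ by $\exp\{\int_0^t(e^{m(s)\zeta}-1)h^P ds\}$ and then invokes this boundedness). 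The one genuine divergence is your treatment of the martingale condition (\ref{assumption 3}): you propose the direct criterion that square-integrability of the integrand $J(t,Y_{t-}^{\pi}-m(t)\zeta,Z_t,1)-J(t,Y_{t-}^{\pi},Z_{t-},0)$ against the compensator makes the integral with respect to $M^{P}$ a true martingale, whereas the paper localizes along exit times $\tau_i$ and proves in Lemma \ref{uniformly integrability} that the stopped values are bounded in $L^{2}$, hence uniformly integrable, concluding via Corollary 7.8 of Klebaner. Both routes rest on the same second-moment estimates (linear growth of $\hat\xi,\tilde\xi$ plus the exponential moment of $Y^{\pi}$), so the difference is one of packaging: your $L^{1}$/$L^{2}$-compensator criterion is the more direct and standard argument for compensated point-process integrals, while the paper's localization avoids stating integrand integrability explicitly and works through uniform integrability of the stopped family. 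One caveat applies equally to you and to the paper: the Verification Theorem demands (\ref{assumption 1})--(\ref{assumption 3}) for \emph{every} admissible $\pi$, yet the bond-term exponential moment is really established only when $m$ is bounded (as $m^{*}$ is); like the paper, you gesture at this with ``admissibility keeps it integrable,'' which is acceptable only if boundedness (or a suitable exponential-moment condition on $m$) is built into the definition of $\mathcal{A}$ — a point worth making explicit, since the paper never defines $\mathcal{A}$ precisely.
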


\begin{proof}
: The proof is the same as the post-default case. We have already checked
that
\begin{equation}
J(t,y,z,0)=\bar{f}(t,y,z)=-\tilde{\xi}(t,z)\exp\left\{ -\alpha
ye^{r(T-t)}\right\} ,
\end{equation}
solves the HJB equation (\ref{HJB-post}). To prove that $\bar{f}(t,y,z)$ is
the true value function, we shall verify that assumptions (\ref{assumption 1}%
)-(\ref{assumption 3}) of the Theorem \ref{Verification Theorem} are
satisfied by $\bar{f}(t,y,z)$.

\textbf{Step 1}. We consider the case in which $r = 0$. Let $\pi\in \mathcal{%
A}$ be an admissible strategy, then:
\begin{align*}
& \int_{0}^{\infty}\mathbb{E}\left| \bar{f}(t,Y_{t}^{\pi}-\min(x,a),Z_{t})-%
\bar{f}(t,Y_{t}^{\pi},Z_{t})\right| ^{2}dF(x) \\
& =\int_{0}^{a}\mathbb{E}\left| -\tilde{\xi} e^{-\alpha(Y_{t}^{\pi}-x})+%
\tilde{\xi} e^{-\alpha Y_{t}^{\pi}}\right| ^{2}dF(x)+\int_{a}^{\infty }%
\mathbb{E}\left| -\tilde{\xi} e^{-\alpha(Y_{t}^{\pi}-a})+\tilde{\xi}
e^{-\alpha Y_{t}^{\pi}}\right| ^{2}dF(x) \\
& =\int_{0}^{a}\left( e^{\alpha x}-1\right) ^{2}dF(x)\mathbb{E}\left[ \tilde{%
\xi}^{2}(t,Z_{t})\exp\left\{ -2\alpha Y_{t}^{\pi}\right\} \right]
+\int_{a}^{\infty}\left( e^{\alpha a}-1\right) ^{2}dF(x)\mathbb{E}[\tilde{%
\xi }^{2}(t,Z_{t})\exp\{-2\alpha Y_{t}^{\pi} \}].
\end{align*}
To get condition (\ref{assumption 1}), we need only obtain an estimate of:
\begin{equation*}
\mathbb{E}[\tilde{\xi}^{2}(t,Z_{t})\exp\{-2\alpha Y_{t}^{\pi}\}].
\end{equation*}
From (\ref{risk process})we have
\begin{equation*}
\mathbb{E}[\exp(-4\alpha Y_{t}^{\pi})]\leq\mathbb{E}\left[ \exp\left\{
-4\alpha\int_{0}^{t}l(s)\sigma(Z_{s})dW_{1s} +4\alpha\int_{0}^{t}
m(s)(1-H_{s})\zeta dM_{s}^{P}+4\alpha\sum_{i=1}^{N_{t}}\min(X_{i},a)\right\} %
\right]
\end{equation*}
By Step 1 in theorem (\ref{Post-Strategy}), we only need to estimate
\begin{equation*}
\mathbb{E}\exp\left\{ 4\alpha\int_{0}^{t} m(s)(1-H_{s})\zeta
dM_{s}^{P}\right\} .
\end{equation*}
because of that
\begin{align*}
& \mathbb{E}\bigg[\exp\bigg(\int_{0}^{t} m(s)(1-H_{s})\zeta dM_{s}^{P}\bigg)%
\bigg] \\
& \leq\mathbb{E}\bigg[\exp\bigg(\int_{0}^{t} m(s)(1-H_{s})\zeta dH_{s}\bigg)%
\bigg] \\
& \leq\mathbb{E}\bigg[\exp\bigg(\int_{0}^{t} m(s)\zeta dH_{s}\bigg)\bigg] \\
& \leq\exp^{\int_{0}^{t}(e^{m(s)\zeta}-1)h^{P}ds}.
\end{align*}
From (\ref{lower-up bound}) in theorem \ref{Existence Theorem}, we know that
\begin{equation*}
1-\Delta-\ln\frac{1}{\Delta}=-\frac{\Delta}{h^{P}}I\leq\tilde{u}-\hat{u}\leq0
\end{equation*}
Then we have the lower and upper bound of $m^{*}(t,z)$ is that
\begin{align*}
m^{*}(t,z) & =\frac{\ln\tilde{\xi}(t,z)-\ln\hat{\xi}(t,z)+\ln\frac{1}{\Delta}%
}{\alpha\zeta}e^{-r(T-t)} \\
& =\frac{\tilde{u}-\hat{u}+\ln\frac{1}{\Delta}}{\alpha\zeta}e^{-r(T-t)} \\
0\leq\frac{1-\Delta}{\alpha\zeta}e^{-r(T-t)} & \leq m^{*}(t,z)\leq\frac {\ln%
\frac{1}{\Delta}}{\alpha\zeta}e^{-r(T-t)}
\end{align*}

which proves (\ref{assumption 1}).

\textbf{Step 2}. It is the same as Step 2 in theorem \ref{Post-Strategy}
which proves (\ref{assumption 2}).

\textbf{Step 3}. By Lemma \ref{uniformly integrability}, we know that
\begin{equation*}
J({\tau_{i}\wedge T}, Y^{\pi^{*}}_{\tau_{i}\wedge T}-m^{*}(\tau_{i}\wedge
T)\zeta, Z_{\tau_{i}\wedge T}, 1-H_{\tau_{i}\wedge T})-J({\tau_{i}\wedge T},
Y^{\pi^{*}}_{\tau_{i}\wedge T}, Z_{\tau_{i}\wedge T}, H_{\tau_{i}\wedge T})
\end{equation*}
is uniformly integrable which proves (\ref{assumption 3}).
\end{proof}

\begin{lem}
\label{uniformly integrability} Let $\tau_{i}$ be the exist time of $%
(Y_{t},Z_{t},H_{t})$ from the open set $M_{i}$ , where $M_{i}\subset M = [ 0
,\infty)\times[ 0 ,\infty)\times\{0,1\}$ such that $M_{i}\subset M_{i +
1}\subset M$, $i\in N^{+}$ , and $M =\cup_{i} M_{i}$ . Then we have
\begin{equation}
\begin{aligned} &\mathop{\sup}\limits_{i}E\bigg[|J({\tau_i\wedge T},
Y^{\pi^*}_{\tau_i\wedge T}-m(\tau_i\wedge T)\zeta, Z_{\tau_i\wedge T},
1-H_{\tau_i\wedge T})|^2\bigg]< \infty, i\in N^+ . \\
&\mathop{\sup}\limits_{i}E\bigg[|J({\tau_i\wedge T}, Y^{\pi^*}_{\tau_i\wedge
T}, Z_{\tau_i\wedge T}, H_{\tau_i\wedge T})|^2\bigg] < \infty, i\in N^+ .
\end{aligned}
\end{equation}
i.e.%
\begin{equation*}
J({\tau_{i}\wedge T}, Y^{\pi^{*}}_{\tau_{i}\wedge T}-m^{*}(\tau_{i}\wedge
T)\zeta, Z_{\tau_{i}\wedge T}, 1-H_{\tau_{i}\wedge T})-J({\tau_{i}\wedge T},
Y^{\pi^{*}}_{\tau_{i}\wedge T}, Z_{\tau_{i}\wedge T}, H_{\tau_{i}\wedge T})
\end{equation*}
is uniformly integrable.
\end{lem}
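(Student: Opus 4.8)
The plan is to obtain uniform integrability from an $L^{2}$ bound: a family is uniformly integrable as soon as its $L^{2}$ norms are bounded, and a difference of two $L^{2}$-bounded families is again uniformly integrable, so it suffices to prove the two displayed estimates. This is exactly what is needed so that the localized jump integral appearing in assumption (\ref{assumption 3}) is a genuine (rather than merely local) martingale. First I would remove the dependence on $i$: since every exit time satisfies $\tau_{i}\wedge T\le T$, each evaluation of $J$ at $\tau_{i}\wedge T$ is dominated by the running supremum over $[0,T]$, so
\begin{equation*}
\sup_{i} E\big[|J(\tau_i\wedge T,Y^{\pi^*}_{\tau_i\wedge T},Z_{\tau_i\wedge T},H_{\tau_i\wedge T})|^2\big]\le E\Big[\sup_{0\le t\le T}|J(t,Y^{\pi^*}_t,Z_t,H_t)|^2\Big],
\end{equation*}
and the right-hand side no longer involves $i$. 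For the shifted evaluation I would use the two-sided bound $0\le m^{*}(t,z)\le \tfrac{\ln(1/\Delta)}{\alpha\zeta}e^{-r(T-t)}$ established in the proof of Theorem \ref{Pre-Strategy}; this makes the shift $\alpha m^{*}\zeta e^{r(T-t)}\in[0,\ln(1/\Delta)]$ bounded, so $\exp\{\alpha m^{*}\zeta e^{r(T-t)}\}\le 1/\Delta$ and the shifted term is controlled by $1/\Delta$ times an unshifted evaluation (with $\hat{\xi}$ in place of $\tilde{\xi}$). Thus both families reduce to the single running-supremum estimate.

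Next I would insert the explicit form of the value function together with the linear growth bounds $|\hat{\xi}(t,z)|\le C(1+|z|)$ and $|\tilde{\xi}(t,z)|\le C(1+|z|)$ from Theorems \ref{UAE} and \ref{Existence Theorem}; these hold in both the $h=0$ and $h=1$ branches, so in every case $|J(t,Y,Z,H)|\le C(1+|Z|)\exp\{-\alpha Y e^{r(T-t)}\}$. Writing $\widetilde{Y}_{t}=e^{r(T-t)}Y^{\pi^{*}}_{t}$, the whole problem reduces to showing $E\big[\sup_{0\le t\le T}(1+|Z_t|)^2\exp\{-2\alpha\widetilde{Y}_t\}\big]<\infty$. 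By the Cauchy--Schwarz inequality this is bounded by $\{E[\sup_t(1+|Z_t|)^4]\}^{1/2}\{E[\sup_t\exp\{-4\alpha\widetilde{Y}_t\}]\}^{1/2}$, and the first factor is finite by the moment estimate $E[\sup_{0\le t\le T}Z_t^{4}]\le C(1+|z|^{4})$ of Theorem A.2 of Badaoui and Fern\'andez \cite{BF}, exactly as in the proof of Theorem \ref{Post-Strategy}.

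The main obstacle is the second factor, $E[\sup_{0\le t\le T}\exp\{-4\alpha\widetilde{Y}_t\}]<\infty$, since the fixed-time estimate of Theorem \ref{Post-Strategy} must now be upgraded to a maximal estimate while respecting the non-independence of the continuous and jump parts. Using the SDE for $\widetilde{Y}_{t}$ derived in the proof of Theorem \ref{Post-Strategy}, I would split $-4\alpha\widetilde{Y}_t$ into three pieces. The drift integral has an integrand that is bounded on $[0,T]$ (the coefficients are bounded by Hypothesis \ref{hypothesis 2}, and $l^{*},a^{*},m^{*}$ are bounded), so it contributes a deterministic bounded factor; the possible default jump contributes at most $\exp\{4\alpha e^{rT}m^{*}\zeta\}$, again bounded. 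For the continuous martingale $N_t=\int_0^t e^{r(T-s)}l^{*}(s)\sigma(Z_s)\,dW_{1s}$ I would write $\exp\{-4\alpha N_t\}$ as an exponential martingale times $\exp\{8\alpha^{2}\langle N\rangle_t\}$, note that $\langle N\rangle_T$ is bounded because $(\mu(z)-r)^{2}/\sigma^{2}(z)$ is bounded, and then apply Doob's $L^{2}$ maximal inequality to the exponential martingale to control $E[\sup_t\exp\{-8\alpha N_t\}]$. For the compound-Poisson jump integral, monotonicity of $\sum_{i=1}^{N_t}\min(X_i,a)$ in $t$ gives $\sup_t = $ value at $T$, and its exponential moment is finite by the compound-Poisson moment generating function together with the integrability conditions (\ref{integability 1})--(\ref{integability 2}) (the factor $8\alpha e^{rT}$ there leaves the room consumed by the Cauchy--Schwarz split), as in the proof of Theorem \ref{Post-Strategy}. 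Combining these three controls yields the second factor, hence the two $L^{2}$ bounds, and therefore the claimed uniform integrability.
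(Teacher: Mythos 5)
Your proposal is correct, and at its core it follows the same route as the paper: reduce uniform integrability to an $L^{2}$ bound (the paper cites Corollary 7.8 of Klebaner, you invoke de la Vall\'ee-Poussin with $\Phi(x)=x^{2}$ in effect), insert the explicit exponential form of $J$, use the linear growth bounds on $\hat{\xi}$ and $\tilde{\xi}$, split by Cauchy--Schwarz into a $Z$-moment factor (Theorem A.2 of Badaoui and Fern\'andez) and an exponential factor in $Y^{\pi^{*}}$, and control the latter through the SDE for the discounted wealth with the bounds $0\leq m^{*}(t,z)\leq\frac{\ln(1/\Delta)}{\alpha\zeta}e^{-r(T-t)}$, an exponential-martingale argument for the Brownian part, and the compound-Poisson moment generating function for the claims part. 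Where you genuinely differ --- and improve on the paper --- is in how you handle the random times $\tau_{i}\wedge T$: the paper's proof only estimates the second moments $\mathbb{E}[J^{2}(s,\cdot)]$ at \emph{deterministic} times $s$ and then passes to the stopped quantities without comment, whereas you dominate every stopped evaluation by the running supremum over $[0,T]$ and then prove a maximal estimate, upgrading the fixed-time exponential bound via Doob's $L^{2}$ inequality applied to the exponential martingale (Novikov holds since $(\mu(z)-r)^{2}/\sigma^{2}(z)$ is bounded, so $\langle N\rangle_{T}$ is bounded) and using monotonicity of the claims integral so that its supremum sits at $T$. This closes a real gap in the published argument, at the modest cost of needing slightly larger exponents, which your integrability hypotheses (\ref{integability 1})--(\ref{integability 2}) comfortably absorb. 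Two small remarks: your displayed exponent ``$-8\alpha N_{t}$'' versus the earlier ``$-4\alpha N_{t}$'' is an internal inconsistency of bookkeeping, not of substance; and for the compound-Poisson factor you do not actually need (\ref{integability 1})--(\ref{integability 2}) at all, since $\min(X_{i},a^{*})\leq a^{*}\leq\ln(1+\theta)/\alpha$ makes the jumps bounded and the Poisson moment generating function alone suffices (this is also how the paper's Step 1 bound $\exp\{\frac{\lambda t}{2}(e^{8a\alpha}-\cdots)\}$ arises). Note also that your bound $\exp\{\alpha m^{*}\zeta e^{r(T-t)}\}\leq 1/\Delta$ reproduces the correct computation; the paper's corresponding display carries a stray factor $h^{P}/\Delta$ that appears to be a typo.
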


\begin{proof}
: In view of Eq.(\ref{risk process}), the wealth process associated with the
strategy $\pi^{*}$ is

\beq\label{risk process 1}
\begin{aligned}
Y^{\pi^*}_{t}&=y+\int_0^t[r(Z_{t})Y^{\pi^*}_{t}+(\mu(Z_{t})-r(Z_{t}))l(t)+c^{(a)}+(1-H_{t})m(t)\delta]dt\\
&+\int_0^t l(t)\sigma(Z_{t})dW_{1t}-\int_0^t m(t)(1-H_{t})\zeta dM^{P}_{t}-\sum_{i=1}^{N_{t}}\min(X_{i},a(t)). \end{aligned}\nneq

Let
$$\bar{Y}^{*}_{t}=e^{-rt}Y^{\pi^*}_{t}.$$

An application of It$\hat{o}$'s formula leads to

\beq\label{risk process 2}
\begin{aligned}
\bar{Y}^{*}_{t}&=y+\int_0^te^{-rs}dY^{\pi^*}_{s}+\int_0^t(-r)e^{-rs}Y^{\pi^*}_{s}ds\\
&=y+\int_0^t[e^{-rs}(\mu(Z_{s})-r))l^*(s)+c^{(a^*(s))}+(1-H_{s})m^*(s)\delta(1-\Delta)]ds\\
&+\int_0^t e^{-rs}l^*(s)\sigma(Z_{s})dW_{1s}-\int_0^t e^{-rs}m^*(s)(1-H_{s})\zeta dM^{P}_{s}-\int_0^t e^{-rs}d\sum_{i=1}^{N_{s}}\min(X_{i},a^*(s))\\ &=y+\int_0^te^{-rT}\bigg[\frac{(\mu(Z_{s})-r))^2}{\alpha\sigma^2(Z_s)}+c^{(a^*(s))}+(1-H_{s})
\frac{\ln\tilde{\xi}(s,Z_s)-\ln\hat{\xi}(s,Z_s)+\ln\frac{1}{\Delta}}{\alpha\zeta}\delta\bigg]ds\\
&+\int_0^t e^{-rT}\frac{\mu(Z_{s})-r}{\alpha\sigma(Z_s)}dW_{1s}-\int_0^t e^{-rT}\frac{\ln\tilde{\xi}(s,Z_s)-\ln\hat{\xi}(s,Z_s)+\ln\frac{1}{\Delta}}{\alpha\zeta}(1-H_{s})\zeta dH_{s}\\
&-\sum_{i=1}^{N_{t}}\min(e^{-rT_i}X_{i},e^{-rT_i}a^*(t)).\end{aligned}\nneq

For the case $H_{t}=0$, we have
\begin{equation*}
J(s,Y_{s}^{\pi^{*}}-m^{*}(s)\zeta,Z_{s},1)=-\frac{h^{P}}{\Delta}\tilde{\xi }%
(s,Z_{s})\exp\{-\alpha Y_{s}^{\pi^{*}}e^{r(T-s)}\}
\end{equation*}
\begin{equation*}
J(s,Y_{s}^{\pi^{*}},Z_{s},0)=-\tilde{\xi}(s,Z_{s})\exp\{-\alpha Y_{s}^{\pi
^{*}}e^{r(T-s)}\}
\end{equation*}
Then, we need only obtain an estimate of:
\begin{equation*}
\mathbb{E}\bigg[J^{2}(s,Y_{s}^{\pi^{*}}-m^{*}(s)\zeta,Z_{s},1)\bigg]\newline
=\bigg(\frac{h^{P}}{\Delta}\bigg)^{2}\mathbb{E}\bigg[\tilde{\xi}%
^{2}(s,Z_{s})\exp\bigg\{-2\alpha Y_{s}^{\pi^{*}}e^{2r(T-s)}\bigg\}\bigg]
\end{equation*}
and
\begin{equation*}
\mathbb{E}\bigg[J^{2}(s,Y_{s}^{\pi^{*}},Z_{s},0)\bigg]\newline
=\mathbb{E}\bigg[\tilde{\xi}^{2}(s,Z_{s})\exp\bigg\{-2\alpha
Y_{s}^{\pi^{*}}e^{2r(T-s)}\bigg\}\bigg]
\end{equation*}
by the same argument in Step 1 in the proof of Theorem \ref{Post-Strategy},
we can get the result. Similarly, we have the same result for the case $%
H_{t}=1$. Then by Corollary 7.8 in \cite{KFC}, we conclude that
$$
J({\tau_{i}\wedge T}, Y^{\pi^{*}}_{\tau_{i}\wedge T}-m^{*}(\tau_{i}\wedge
T)\zeta, Z_{\tau_{i}\wedge T}, 1-H_{\tau_{i}\wedge T})-J({\tau_{i}\wedge T},
Y^{\pi^{*}}_{\tau_{i}\wedge T}, Z_{\tau_{i}\wedge T}, H_{\tau_{i}\wedge T})
$$
is uniformly integrable.
\end{proof}

\subsection{Numerical results}
In this section, we solve the Cauchy problem (3.16) and the first initial-boundary value problem (3.40) by using the finite-difference method. First, we assume that the claims are exponentially distributed with parameter $b$, and $T<\frac{1}{r}\log(b/\alpha)$,.the first step is to reduce the problem (3.16) and (3.42) to a bounded domain, i.e., $\mathbb{R}$ is replaced by $[-d,d],d<\infty$, and to add artificial boundary conditions. Then the Cauchy problem (3.16) to solve is the following:\\

\beq
\left\{\begin{aligned}
&0=\xi_t+\frac{1}{2}\beta^2\xi_{zz}+g(z)\xi_z-\xi\bigg\{\frac{(\mu(z)-r)^2}{2\sigma^2(z)}
+\alpha e^{r(T-t)}\bigg[(1+\eta)\lambda\mu_\infty\\
&-\frac{\lambda}{b}\exp\{(1-\frac{b}{\alpha}e^{-r(T-t)}\ln(1+\theta))\}\bigg]-\lambda\alpha\frac{e^{r(T-t)}}{\alpha e^{r(T-t)}-b}[\exp\{(\alpha e^{r(T-t)}-b)\frac{e^{-r(T-t)}}{\alpha}\ln(1+\theta)\}-1]\bigg\},\\
&\xi(z,T)=1,  \forall z\in[-d,d],\\
&\xi(z,t)=1,  \forall z\bar{\in}]-d,d[\times[0,T].
\end{aligned}\right.\end{equation}

From Friedman(1975), we know that the solution of (3.34) exists and is unique. The imposed boundary conditions give a good error estimate for large values of $d$.\\

Now we discretize (3.43) in the domain $A:=[-a,a]\times[0,T]$. A uniform grid on $A$ is given by:\\

\begin{align*}
&z_i=-d+(i-1)h, i=1,...,N, h=2d/(N-1),\\
&t_j=(j-1)k,  j=1,...,M,  k=T/(M-1).
\end{align*}

The space and time derivatives are discretized using finite differences as follows:\\
\begin{align*}
&\xi_t(z_i,t_j)\simeq\frac{\xi(z_i,t_j)-\xi(z_i,t_j-k)}{k},\\
&\xi_z(z_i,t_j)\simeq\frac{\xi(z_i+h,t_j)-\xi(z_i-h,t_j)}{2h},\\
&\xi_{zz}(z_i,t_j)\simeq\frac{\xi(z_i+h,t_j)-2\xi(z_i,t_j)+\xi(z_i-h,t_j)}{h^2}.
\end{align*}

We denote by $\xi_i^j:=\xi(z_i,t_j)$ the solution on the discretized domain. Then by substituting the derivatives by the expressions given above, (3.34) becomes:\\

\begin{align*}
\frac{\xi_i^j-\xi_i^{j-1}}{k}&+\frac{1}{2}\beta^2\frac{\xi_{i+1}^j-2\xi_i^j+\xi_{i-1}^j}{h^2}
+g(z)\frac{\xi_{i+1}^j-\xi_{i-1}^j}{2h}-\xi_i^j\{\frac{(\mu(z_i)-r)^2}{2\sigma^2(z_i)}\\
&+\alpha e^{r(T-t_j)}[(1+\eta)\lambda\mu_\infty
-\frac{\lambda}{b}\exp\{(1-\frac{b}{\alpha}e^{-r(T-t_j)}\ln(1+\theta))\}]\\
&-\lambda\alpha\frac{e^{r(T-t_j)}}{\alpha e^{r(T-t_j)}-b}[\exp\{(\alpha e^{r(T-t_j)}-b)\frac{e^{-r(T-t_j)}}{\alpha}\ln(1+\theta)\}-1]\}=0.
\end{align*}

Then for $i=2,...,N-1$ and $j=2,...,M$, $\xi_i^j$ satisfies the following explicit scheme:\\
\beq
\begin{aligned}
\xi_i^{j-1}=&(1-\frac{k\beta^2}{h^2}-k(\frac{(\mu(z_i)-r)^2}{2\sigma^2(z_i)}+\alpha e^{r(T-t_j)}[(1+\eta)\lambda\mu_\infty
-\frac{\lambda}{b}\exp\{(1-\frac{b}{\alpha}e^{-r(T-t_j)}\ln(1+\theta))\}]\\
&-\lambda\alpha\frac{e^{r(T-t_j)}}{\alpha e^{r(T-t_j)}-b}[\exp\{(\alpha e^{r(T-t_j)}-b)\frac{e^{-r(T-t_j)}}{\alpha}\ln(1+\theta)\}-1]))\xi_i^j\\
&+(\frac{k\beta^2}{2h^2}+\frac{k}{2h}g(z_i))\xi_{i+1}^j
+(\frac{k\beta^2}{2h^2}-\frac{k}{2h}g(z_i))\xi_{i-1}^j.
\end{aligned}
\nneq

The final condition is given by:\\
$\xi_i^M=1$, for all $i=1,...,N$.\\
The imposed boundary conditions will be given by:\\
$\xi_1^j=1$, for all $j=1,...,M-1$,\\
$\xi_{N+1}^j=1$, for all $j=1,...,M-1$.\\

Similarly, we can obtain $u_i^j$ satisfies the following explicit scheme:\\
\beq
\begin{aligned}
u_i^{j-1}=&(1-\frac{k\beta^2}{h^2}-\frac{kh^p}{\Delta}u_i^j
+(\frac{k\beta^2}{2h^2}+g(z_i)\frac{k}{2h})u_{i+1}^j
+(\frac{k\beta^2}{2h^2}-g(z_i)\frac{k}{2h})u_{i-1}^j\\
&-k\{\frac{(\mu(z_i)-r)^2}{2\sigma^2(z_i)}+\alpha e^{r(T-t_j)}[(1+\eta)\lambda\mu_\infty
-\frac{\lambda}{b}\exp\{(1-\frac{b}{\alpha}e^{-r(T-t_j)}\ln(1+\theta))\}]\\
&-\lambda\alpha\frac{e^{r(T-t_j)}}{\alpha e^{r(T-t_j)}-b}[\exp\{(\alpha e^{r(T-t_j)}-b)\frac{e^{-r(T-t_j)}}{\alpha}\ln(1+\theta)\}-1]\\
&+(1-\frac{1}{\Delta}+\frac{1}{\Delta}\ln\frac{1}{\Delta})h^p-\frac{h^p}{\Delta}\ln\xi_i^{j-1}\}.
\end{aligned}
\nneq

and we have
\begin{align}
\bar{\xi}_i^j=\exp\{u_i^j\}.
\end{align}

The final condition is given by:
$u_i^M=0$, for all $i=1,...,N$.
The imposed boundary conditions will be given by:
$u_1^j=0$, for all $j=1,...,M-1$,
$u_{N+1}^j=0$, for all $j=1,...,M-1$.

Our algorithm given by the explicit scheme, final condition and the imposed boundary conditions is backward in time, forward in space, and hence, by the explicit scheme, the numerical solution can be computed.

\bexa (The Value Functions)Suppose:
\begin{align*}
&r=0.04, \mu=0.3, \sigma(z)=e^z, \delta=0.1, \kappa=1, \lambda=3, \alpha=0.02,
\beta=0.3, b=2, d=2 ,\\
&T=5, \mu_\infty=1/2, \eta=7/3,\theta=8/3, h^p=0.25, \Delta=0.25, \zeta=0.4, N=401, M=50001.
\end{align*}
Harnessing the method (3.58), (3.59) and the relation (3.60), we can know the figures of assessment function before and after the cooperate bond default and conclusions as FIGURE 1.

\begin{figure}[htbp]
\subfigure[1]{
\begin{minipage}{8cm}
\centering
\includegraphics[width=8cm,height=5cm]{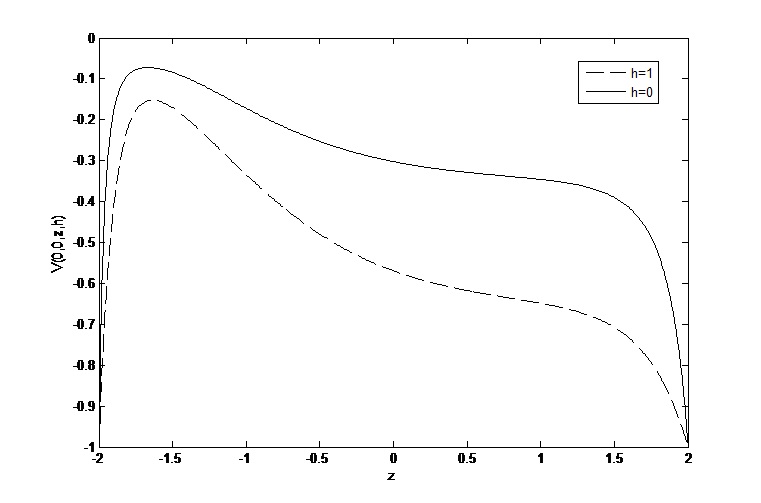}
\end{minipage}
}
\subfigure[2]{
\begin{minipage}{8cm}
\centering
\includegraphics[width=8cm,height=5cm]{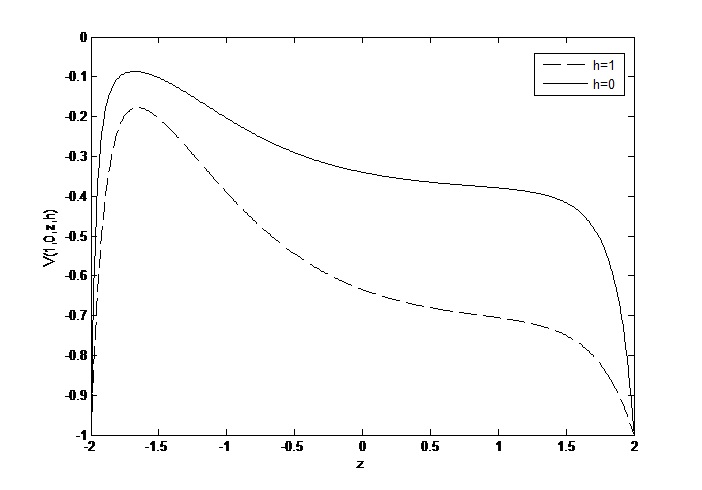}
\end{minipage}
}
\caption{}
\end{figure}

\obse From FIGURE 1, we conclude the following:
\begin{itemize}
 \item[(1)]	Assessing model is progressively decreasing by time $t$.
 \item[(2)]	Assessing procession is progressively increasing by $y$, which can be claimed by the function.
 \item[(3)] Before-defaulting assessing model is better than after-defaulting one obviously, which proves that Insurance companies can obtain much more profits after investing surplus in defaultable bonds.
\end{itemize}
\eobse
\begin{figure}[htbp]
\subfigure[1]{
\begin{minipage}{8cm}
\centering
\includegraphics[width=8cm,height=5cm]{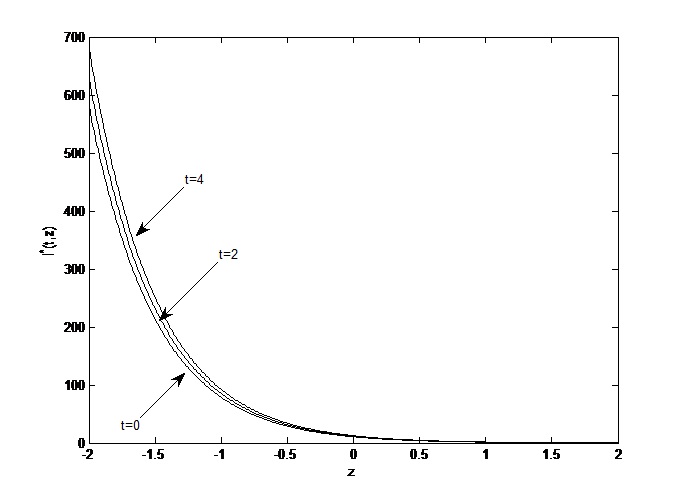}
\end{minipage}
}
\subfigure[2]{
\begin{minipage}{8cm}
\centering
\includegraphics[width=8cm,height=5cm]{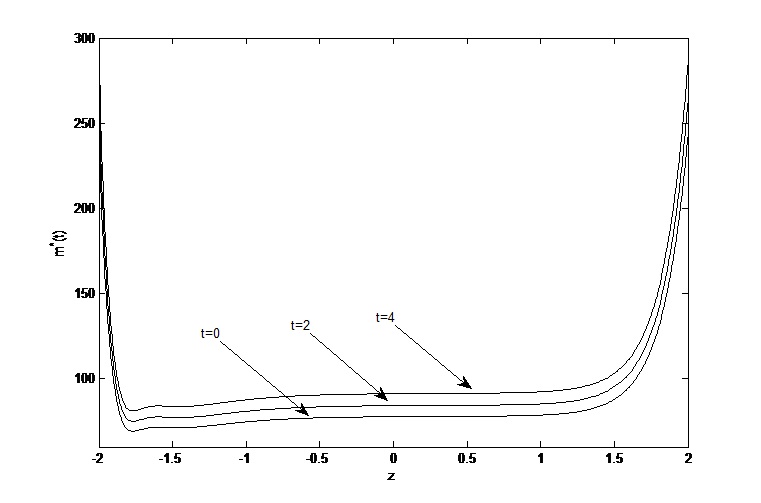}
\end{minipage}
}
\subfigure[3]{
\begin{minipage}{8cm}
\centering
\includegraphics[width=8cm,height=5cm]{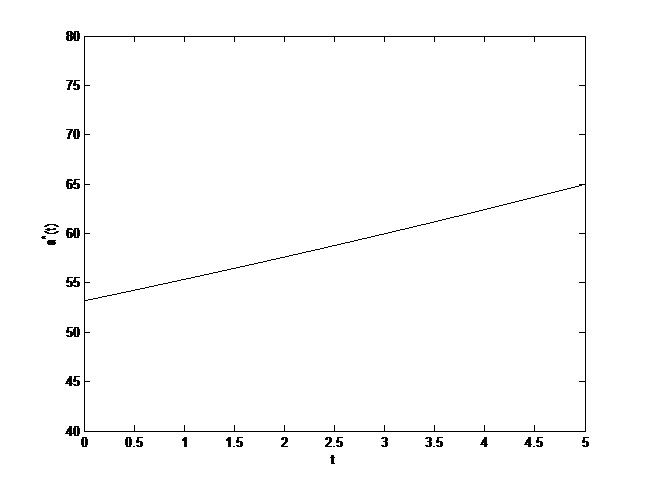}
\end{minipage}
}
\caption{}
\end{figure}
\obse The tendency of the optimal investing strategies $\pi^*(t)=(l^*(t),m^*(t),a^*(t))$ can be presented by FIGURE 2 respectively and the conclusions are followed:
\begin{itemize}
 \item[(1)] The investments in the asset of risk market is progressively decreasing in $z$ and increasing in $t$.
 \item[(2)] The investments in corporate bond is increasing in $t$. These will drop at first and then increase in $z$.
 \item[(3)]	The amount of retention of excess-of-loss reinsurance is increasing about $t$.
\end{itemize}
\eobse
\begin{figure}[htbp]
\subfigure[1]{
\begin{minipage}{6cm}
\centering
\includegraphics[width=6cm,height=4cm]{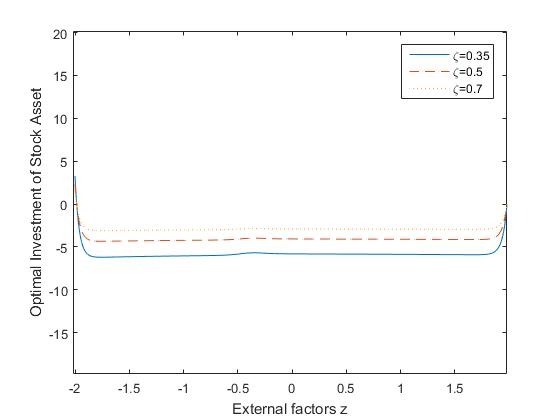}
\end{minipage}
}
\subfigure[2]{
\begin{minipage}{6cm}
\centering
\includegraphics[width=6cm,height=4cm]{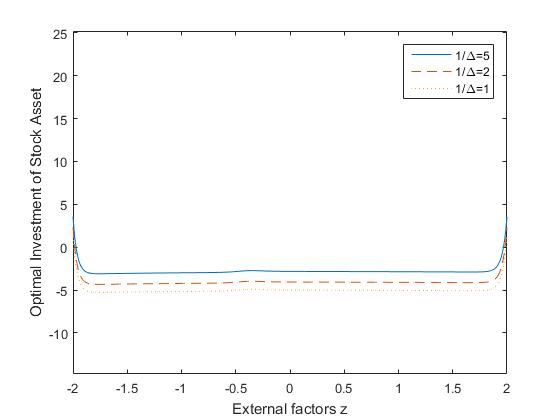}
\end{minipage}
}
\caption{The influence of the external factors of the optimal investment of stock asset with different loss rate and the default premium}
\end{figure}
\obse Considering the change of interval $\zeta$ and default risk premium $\frac{1}{\Delta}$, we need to make deeper numerical analysis.
\begin{itemize}
 \item[(1)]	In FIGURE 3(a)1, the external factor leads to the decrease of the optimal strategy  at first and then the increase. At the same time, the corporate bond is positively correlated with default risk premium $\frac{1}{\Delta}$. Insurance companies should invest a larger proportion of asset on corporate bond with higher risk of default.

 \item[(2)] In FIGURE 3(b)2, the insurer companies will introduce fewer investment in corporate bond when the loss rate is lower. In a nutshell, the adding $\zeta$ reflects few influence on the optimal investment of a corporate bond.
\end{itemize}
\eobse
\nexa

\bexa
Suppose:
\begin{align*}
&r=0.04, \mu=0.3, \sigma(z)=e^z, \delta=0.1, \kappa=1, \lambda=3, \alpha=0.2,
\beta=0.3, b=2, d=2 ,\\
&T=50, \mu_\infty=1/2, \eta=7/3,\theta=8/3, h^p=0.25, \Delta=0.25, \zeta=0.4, N=401, M=50001.
\end{align*}
\begin{figure}[htbp]
\subfigure[1]{
\begin{minipage}{6cm}
\centering
\includegraphics[width=6cm,height=4cm]{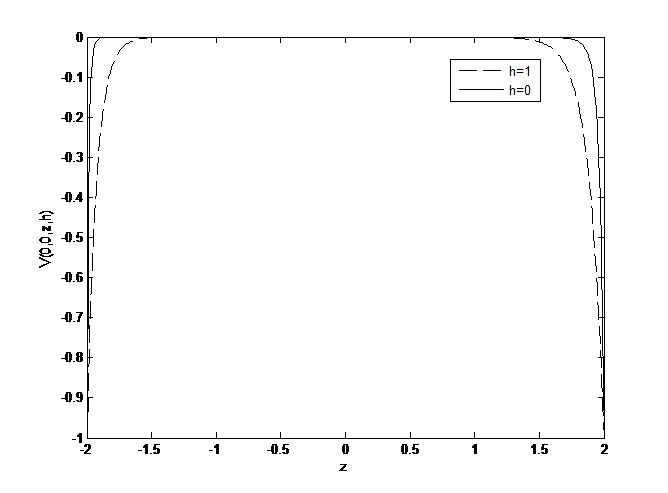}
\end{minipage}
}
\subfigure[2]{
\begin{minipage}{8cm}
\centering
\includegraphics[width=6cm,height=4cm]{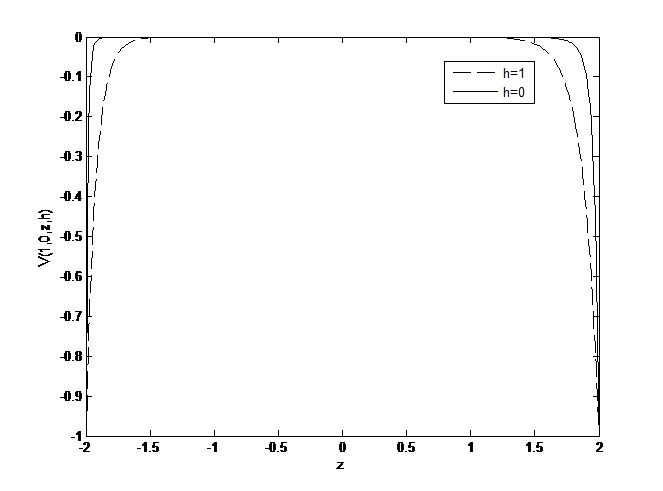}
\end{minipage}
}
\caption{}
\end{figure}
\obse
The FIGURE 4 express the situation of before default and after default. In pictures, the insurance companies can put most money on defaultable cooperate bond for more profit.
\eobse
\nexa

\bexa(The Sensitivity of the Optimal Investment of a Corporate Bond)
Assume $T-t=1$, $\alpha=0.5$, $r=0.04$. Then we operate the optimal strategy for $\frac{1}{\Delta }\in  [1,10]$ and $\theta \in [0.1,1]$. Firstly, fixing varying parameter $\xi$, we make comparisons between different $\zeta$ and $1/\Delta$. The function of the corporate bond can be expressed as follow:
\beq
m^*(t)=\frac{ln\frac{1}{\Delta }}{\alpha \zeta}.
\nneq
The comparisons were presented by following FIGURE 5.
\begin{figure}[htbp]
  \centering
  \subfigure[1]{
  \begin{minipage}{6cm}
  \centering
  \includegraphics[width=6cm,height=4cm]{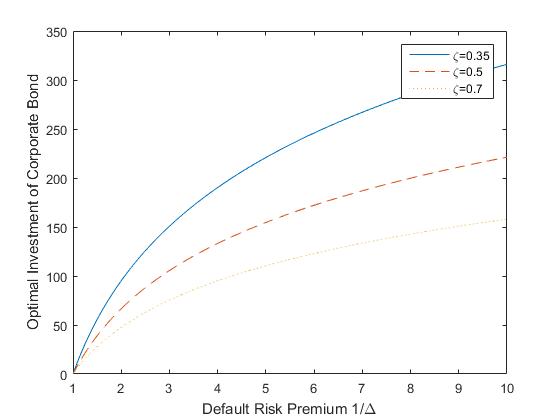}
  \end{minipage}
  }
  \subfigure[2]{
  \begin{minipage}{6cm}
  \centering
  \includegraphics[width=6cm,height=4cm]{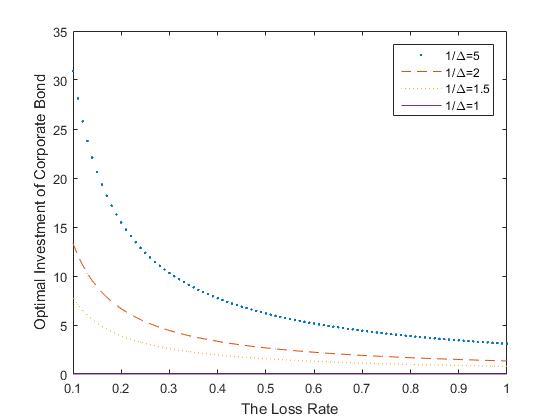}
  \end{minipage}
  }
  \subfigure[3]{
  \begin{minipage}{6cm}
  \centering
  \includegraphics[width=6cm,height=4cm]{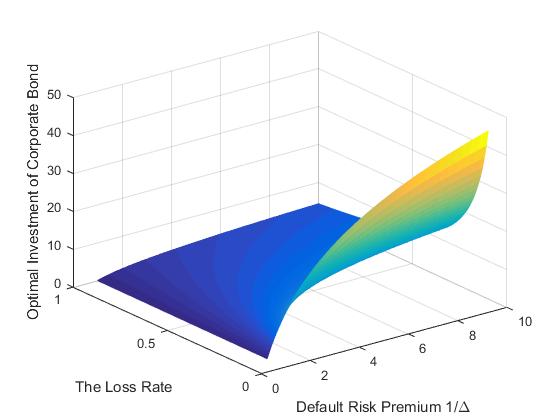}
  \end{minipage}
  }
  \caption{The influence of the loss rate and the default risk premium on the optimal investment of a corporate bond}
  \end{figure}

\obse Herein, we calculate the sensitivity of the optimal investment of a corporate bond. From FIGURE 5 we can tell:
\begin{itemize}
 \item[(1)]	The optimal investment of corporate for the default risk has positive relationship with default risk premium in FIGURE 5(a)1. The insurance companies will invest a relatively amount of money in a corporate bond with higher default risk condition.
 \item[(2)] There is a negative relation between loss rate and the optimal investment in FIGURE 5(b)2. FIGURE 5(b)2 describes that insurer will reduce the investment in corporate bond with increasing loss rate.
 \item[(3)] If the risk premium satisfies $ \frac{1}{\Delta }=1$, the insurance companies will not invest in corporate bond any more.  FIGURE 5(c)3 depicted comprehensive result.
\end{itemize}
\eobse
\nexa

\bexa(The Effect of RAP on OPRS)
When we treat $T=10$, $r=0.04$, the analysis of reinsurance strategy can be explained by exponential value function factor $\alpha$. Now we have $t\in[0,T]$, which means $t\in [0.10]$. We adopted various parameters $\alpha $ in order to compare the effectiveness of optimal excess-of-loss reinsurance.
Now the optimal excess-of-loss reinsurance was expressed:
\beq
\alpha^*(t)=\frac{ln(1+\theta)}{\alpha}e^{-r(T-t)}.
\nneq
According to the preconditions, the results can be told by FIGURE 6
\begin{figure}[htbp]
\centering
\subfigure[1]{
\begin{minipage}{6cm}
\includegraphics[width=6cm,height=4cm]{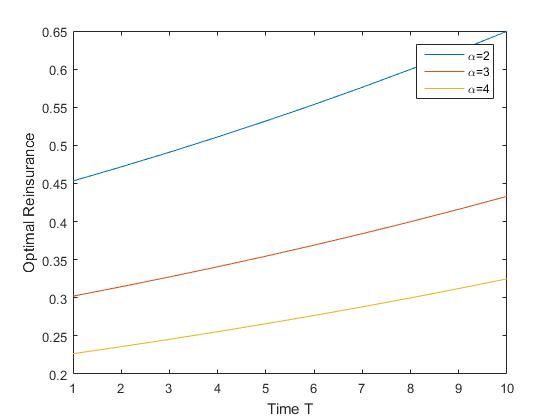}
\end{minipage}
}
\subfigure[2]{
\begin{minipage}{6cm}
\centering
\includegraphics[width=6cm,height=4cm]{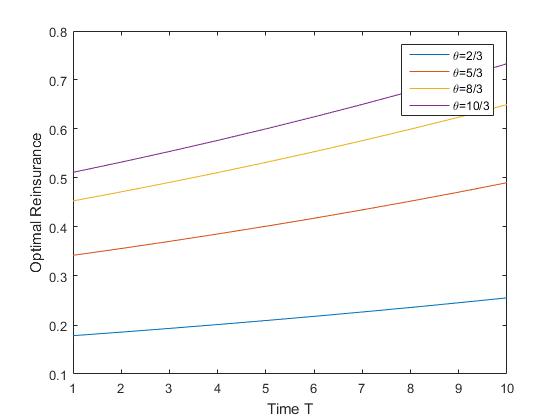}
\end{minipage}
}
\caption{The influence of the insurer's RAP on OPR}
\end{figure}
\obse According to FIGURE 6, the conclusions are presented below:
\begin{itemize}
 \item[(1)] From FIGURE 6(a)1, The utility of optimal excess reinsurance is increasing in time $t$.
 \item[(2)] When the parameter $\alpha $ grows progressively, the effect of the optimal investment is limited. The insurers will be willing to purchase more excess-of-loss reinsurance in order to reduce the risk of investing a value function with higher interval.
 \item[(3)] We can compare the safety loading $sigma$. Varying safety loading $sigma$ can generate multiple effects of reinsurance strategies, which can be compared by using previous data.
 \item[(4)] From FIGURE 6(b)2, when $sigma$ is bigger, the utility of excess-of-loss reinsurance strategies will be larger. If the insurers purchase the investing products with higher parameter $sigma$, they will need to restrain this kind of investment. In contrast, the companies should invest more money on a strategy with lower $sigma$.
\end{itemize}
\eobse
\nexa

\bexa(The Effect of RAP on OPRS)
The aim of discussion is the relationship between property and exponential value function factor $\alpha$. Suppose $T=10$, $r=0.04$, and then $t\in[0,T]$, which means $t\in[0.10]$.The relation of property $l^*(t)$ is:

\beq
l^*(t)=\frac{\mu(z)-r}{\alpha \sigma ^2(z)}e^{-r(T-t)}.
\nneq

In this function, the volatility $\sigma(z)$ is
\beq
\sigma(z)=e^z,
z_{i}=-a+(i-1)h,
h=\frac{2a}{n-1}.
\nneq
From these functions, we can make further assumption  $a=2$ and $n=10$, and then the results are shown as FIGURE 7:
\begin{figure}[H]
\centering
\subfigure[1]{
\begin{minipage}{6cm}
\centering
\includegraphics[width=6cm,height=4cm]{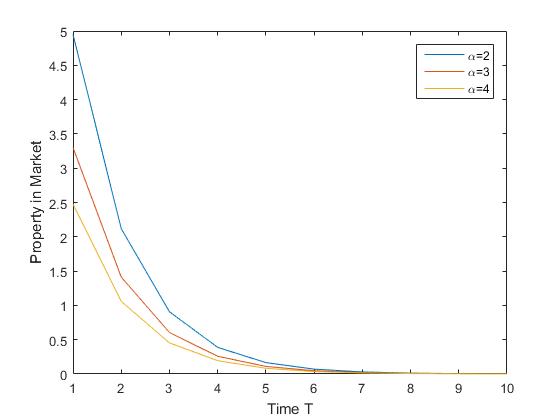}
\end{minipage}
}
\subfigure[2]{
\begin{minipage}{6cm}
\centering
\includegraphics[width=6cm,height=3cm]{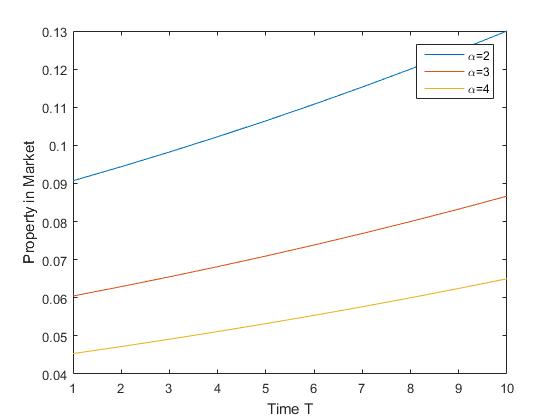}
\end{minipage}
}
\caption{The influence of the insurer's RAP on OPR}
\end{figure}

\obse According to FIGURE 7, the conclusions are as follows:
\begin{itemize}

 \item[(1)] the longer length of time will result in less utility of investments in market. When the parameter is increasing, the property in investment will reduce. Consequently, for insurers, the investment in large factor $\alpha $ will bring about restricted fortune in risky market.
 \item[(2)] If we ignore the volatility of the market or treat all volatility are the same, the result are as what our FIGURE 7(b)2 about. The results are totally different between the result which have same volatility or not.
 \item[(3)] Now more and more money are invested in market with increasing time. Obviously, if adding the consideration of volatility, the insurers will put less property on market in longer time. As a result, the longer time will generate lager volatility, larger uncertain factors and larger risk. In order to obtain steady income, we do not need to invest more money on market later. However, when the factor $\alpha $ increases, the money which put on market will reduce. If insurers decide to focus on an investment in value function with a lager parameter, the market property will reduce.
\end{itemize}
\eobse
\nexa

\section{Acknowledgments}

The authors would like to thank Professor Lijun Bo, for his detailed guidance and instructive suggestions.
N. Yao was supported by Natural Science Foundation of China (11101313, 113713283).

\section{Appendix}

\begin{thm}
\label{A.1} (Friedman,1975). We consider the following Cauchy problem
\begin{equation}  \label{F-Cauchy}
\left\{\begin{aligned} &u_t(x,t)+\mathcal L u(x,t)=f(x,t) \quad in \quad
\mathbb{R}^n\times[0,T)\\ &u(x,T)=h(x) \quad in \quad \mathbb{R}^n,
\end{aligned}\right.
\end{equation}

Where $\mathcal{L}$ is given by:
\begin{equation*}
\mathcal{L }u=\frac{1}{2}\sum_{i,j=1}^na_{ij}(x,t)u_{x_ix_j}+%
\sum_{i=1}^nb_i(x,t)u_{x_i}+c(x,t)u.
\end{equation*}

If the Cauchy problem (\ref{F-Cauchy}) satisfies the following conditions:

\begin{itemize}
\item[1.] The coefficients of $\mathcal{L}$ are uniformly elliptic;

\item[2.] The functions $a_{ij}$, $b_i$ are bounded in $\mathbb{R}^n\times[%
0,T]$ and uniformly Lipschitz continuous in $(x,t)$ in compact subsets of $%
\mathbb{R}^n\times[0,T]$;

\item[3.] The functions $a_{ij}$ are H$\ddot{o}$lder continuous in $x$,
uniformly with respect to $(x,t)$ in $\mathbb{R}^n\times[0,T]$;

\item[4.] The function $c(x,t)$ is bounded in $\mathbb{R}^n\times[0,T]$ and
uniformly H$\ddot{o}$lder continuous in $(x,t)$ in compact subsets of $%
\mathbb{R}^n\times[0,T]$;

\item[5.] $f(x,t)$ is continuous in $\mathbb{R}^n\times[0,T]$, uniformly H$%
\ddot{o}$lder continuous in $x$ with respect to $(x,t)$ and $|f(x,t)|\leq
B(1+|x|^\gamma)$;

\item[6.] $h(x)$ is continuous in $\mathbb{R}^n$ and $|h(x)|\leq B(1 +
|x|^\gamma)$, with $\gamma> 0$;
\end{itemize}

then there exists a unique solution $u$ of the Cauchy problem (4.1)
satisfying:
\begin{equation*}
|u(x,t)| \leq const(1 + |x|^\gamma) \quad \quad and \quad \quad |u_x(x, t)|
\leq const(1 + |x|^\gamma).
\end{equation*}
\end{thm}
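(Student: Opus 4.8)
The plan is to prove existence by explicitly constructing a fundamental solution for the parabolic operator $\partial_t + \mathcal{L}$ via the classical parametrix (Levi) method, represent the solution through heat potentials, and then obtain uniqueness from a maximum principle adapted to the polynomial growth class. Since the problem carries a terminal condition at $t=T$, I would first reverse time by setting $s = T-t$ and $v(x,s) = u(x,T-s)$. This turns the backward Cauchy problem into a standard forward one, $v_s - \mathcal{L}_s v = -f(x,T-s)$, $v(x,0)=h(x)$, where $\mathcal{L}_s$ is uniformly elliptic with the same regularity of coefficients. All hypotheses (uniform ellipticity, boundedness and Lipschitz continuity of $a_{ij},b_i$, H\"older continuity of $a_{ij}$ and $c$, H\"older continuity and polynomial growth of $f$, polynomial growth of $h$) are preserved under this change.

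For the construction I would freeze the leading coefficients at a point and use the associated constant-coefficient Gaussian kernel as the parametrix $Z(x,s;\xi,\sigma)$. The true fundamental solution $\Gamma$ is then sought in the form $\Gamma = Z + Z\ast\Phi$, where the correction density $\Phi$ solves a Volterra integral equation of the second kind whose kernel is $(\partial_s-\mathcal{L})Z$. The H\"older continuity of the $a_{ij}$ is exactly what makes this kernel weakly singular (integrable in time with exponent strictly below one), so the Volterra series converges by iteration. This yields Gaussian upper bounds of the shape
\begin{equation*}
|\Gamma(x,s;\xi,\sigma)| \leq C(s-\sigma)^{-n/2}\exp\left\{-\frac{c|x-\xi|^2}{s-\sigma}\right\},
\end{equation*}
together with the corresponding bounds for $\partial_x\Gamma$ and $\partial^2_{xx}\Gamma$ carrying extra negative powers of $(s-\sigma)$.

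With $\Gamma$ in hand I would write the candidate solution as the sum of a Poisson-type potential carrying the initial data and a volume potential of the source,
\begin{equation*}
v(x,s) = \int_{\mathbb{R}^n}\Gamma(x,s;\xi,0)\,h(\xi)\,d\xi + \int_0^s\int_{\mathbb{R}^n}\Gamma(x,s;\xi,\sigma)\,\tilde f(\xi,\sigma)\,d\xi\,d\sigma,
\end{equation*}
with $\tilde f(\xi,\sigma) = -f(\xi,T-\sigma)$. The Gaussian bounds together with the polynomial growth $|h|,|\tilde f|\leq B(1+|\cdot|^\gamma)$ guarantee that both integrals converge and inherit the estimate $|v|\leq \mathrm{const}\,(1+|x|^\gamma)$; differentiating under the integral yields the matching bound for $v_x$. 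The step demanding the most care is verifying that the volume potential is genuinely twice continuously differentiable in $x$ and once in $s$ and satisfies the equation classically: naive differentiation of the singular kernel is not justified, and one must add and subtract a frozen-coefficient term, using precisely the uniform H\"older continuity of $f$ in $x$ to control the resulting singular integral. Establishing these fundamental-solution estimates and the $C^{2,1}$ regularity of the volume potential is the main technical obstacle.

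Finally, for uniqueness I would note that the difference $w$ of two solutions in the growth class solves the homogeneous equation with zero initial data and obeys $|w|\leq\mathrm{const}(1+|x|^\gamma)$. A Phragm\'en--Lindel\"of / maximum-principle argument for parabolic equations in the polynomial growth class (comparing $w$ against an auxiliary barrier of the form $\mathrm{const}\cdot e^{\mu s}(1+|x|^2)^{p}$ for suitable $\mu,p$ and letting the comparison parameter tend to its limit) forces $w\equiv 0$. Reversing the time substitution then delivers the stated existence, uniqueness, and the two growth bounds for $u$ and $u_x$.
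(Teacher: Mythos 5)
Your proposal is correct, but note that the paper itself does not prove Theorem A.1 at all: it quotes the result from Friedman (1975), adding only the remark that the terminal-value problem reduces to the standard forward initial-value problem via the substitution $v(x,T-t)=u(x,t)$ --- which is precisely your first step. The remainder of your sketch (parametrix construction of the fundamental solution via a weakly singular Volterra equation, Gaussian bounds, the potential representation with the H\"older continuity of $f$ in $x$ used to justify classical $C^{2,1}$ regularity of the volume potential, polynomial growth bounds for $u$ and $u_x$ by differentiation under the integral, and uniqueness by a barrier of the form $\mathrm{const}\cdot e^{\mu s}(1+|x|^2)^{p}$ in the growth class) faithfully reproduces the standard argument in the cited reference, so your route and the paper's source coincide.
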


\remarks In the original theorem of Friedman(1975), the Cauchy problem is
given by
\begin{equation}
\left\{\begin{aligned} &v_t(x,t)-\mathcal L v(x,t)=f(x,t) \quad in \quad
\mathbb{R}^n\times[0,T)\\ &v(x,0)=h(x) \quad in \quad \mathbb{R}^n,
\end{aligned}\right.
\end{equation}
We let $v(x,T-t)=u(x,t)$ , then we can get the Cauchy problem shown in (\ref%
{F-Cauchy}).

\end{document}